\documentclass[11pt]{article}
\usepackage[utf8]{inputenc}
\usepackage[T1]{fontenc}
\usepackage[top=1in, bottom=1in, left=1in, right=1in, letterpaper]{geometry}

\usepackage{dsfont}
\usepackage{amsmath}
\usepackage{hyperref}
\usepackage{amsthm}
\usepackage{tikz} 
\usepackage{algorithm}
\usepackage[noend]{algpseudocode}

\usepackage{cite}

\usepackage{thmtools} 
\usepackage{thm-restate}
\usepackage[inline]{enumitem}

\usepackage{color}

\definecolor{blueblack}{rgb}{0,0,.7}

\newcounter{sideremark}

\definecolor{Darkblue}{rgb}{0,0,0.4}
\definecolor{Brown}{cmyk}{0,0.61,1.,0.60}
\definecolor{Purple}{cmyk}{0.45,0.86,0,0}
\definecolor{brickred}{rgb}{0.8, 0.25, 0.33}

\newtheorem{theorem}{Theorem}

\newtheorem{lemma}{Lemma}
\newtheorem{claim}{Claim}

\newtheorem{observation}{Observation}

\newcommand{\eps}{\varepsilon}

\newcommand{\calT}{\mathcal{T}}
\newcommand{\calS}{\mathcal{S}}

\newcommand{\calP}{\mathcal{P}}

\newcommand{\OPT}{\mathsf{OPT}}
\newcommand{\opt}{\mathsf{opt}}

\newcommand{\APX}{\mathsf{APX}}

\newcommand{\selfloop}{\mathsf{sl}}

\newcommand{\floor}[1]{\left\lfloor #1 \right\rfloor}
\newcommand{\ceil}[1]{\left\lceil #1 \right\rceil}

\newcounter{casei}
\newcounter{caseii}[casei]
\newcounter{caseiii}[caseii]

\newenvironment{caseanalysis}{\setcounter{casei}{0}}{}

\newcommand{\case}[1]{%
\refstepcounter{casei}%
\smallskip\noindent\textbf{\boldmath(\arabic{casei}) #1}\unboldmath%
\phantomsection%
}
\newcommand{\subcase}[1]{%
\refstepcounter{caseii}%
\smallskip\noindent\textbf{\boldmath(\arabic{casei}.\arabic{caseii}) #1}\unboldmath%
\phantomsection%
}

\usepackage[colorinlistoftodos,textsize=tiny,textwidth=2cm,color=green!50!gray]{todonotes} %,disable

\title{A PTAS for Weighted Triangle-free 2-Matching}

\author{Miguel Bosch-Calvo\thanks{\texttt{\{miguel.boschcalvo,fabrizio\}@idsia.ch}, IDSIA, USI-SUPSI, Lugano, Switzerland.}
\and Fabrizio Grandoni\footnotemark[1] 
\and Yusuke Kobayashi\thanks{\texttt{\{yusuke,tnoguchi\}@kurims.kyoto-u.ac.jp}, RIMS, Kyoto University, Japan.}
\and Takashi Noguchi\footnotemark[2]
}

\date{}
\begin{document}

\maketitle

\begin{abstract}
\noindent In the \emph{Weighted Triangle-Free 2-Matching} problem (WTF2M), we are given an undirected edge-weighted graph. Our goal is to compute a maximum-weight subgraph that is a 2-matching (i.e., no node has degree more than $2$) and triangle-free (i.e., it does not contain any cycle with $3$ edges). One of the main motivations for this and related problems is their practical and theoretical connection with the Traveling Salesperson Problem and with some $2$-connectivity network design problems. 
WTF2M is not known to be NP-hard and at the same time no polynomial-time algorithm to solve it is known in the general case (polynomial-time algorithms are known only for some special cases). The best-known (folklore) approximation algorithm for this problem simply computes a maximum-weight 2-matching, and then drops the cheapest edge of each triangle: this gives a $2/3$ approximation. 
In this paper we present a PTAS for WTF2M, i.e., a polynomial-time $(1-\eps)$-approximation algorithm for any given constant $\eps>0$. Our result is based on a simple local-search algorithm and a non-trivial analysis.
\end{abstract}

\section{Introduction}\label{sec:intro}

Let $G=(V,E,w)$, $w\colon E\rightarrow \mathds{Q}_{\geq 0}$, be a simple undirected edge-weighted graph. A $2$-matching $M$ of $G$ is a subset of edges $M\subseteq E$ such that at most two edges of $M$ are incident on each node in $V$. Observe that a $2$-matching induces a collection of disjoint paths and cycles. The $2$-Matching problem requires to compute a $2$-matching of maximum weight $w(M):=\sum_{e\in M}w(e)$: this problem can be solved in polynomial time via a reduction to matching (see, e.g.~\cite[Chapter 30.6.]{S03}). 

In this paper we focus on the \emph{Weighted Triangle-Free 2-Matching} problem (WTF2M), a natural generalization of 2-Matching where we further require that the computed matching $M$ does not induce any \emph{triangle} (i.e., cycle with $3$ edges).
Hartvigsen proposed an (extremely complex) exact algorithm for the unweighted version TF2M in his Ph.D. thesis in 1984~\cite{HartD}, and a (similarly complex) journal version of that result was recently published~\cite{R-HartD}.
Recently, Paluch~\cite{PalK} reported another polynomial-time algorithm for the problem (still unpublished).
For the sake of simplification in verification and analysis, Bosch-Calvo, Grandoni, and Ameli~\cite{BGA} proposed a simple local search-based Polynomial-Time Approximation Scheme (PTAS) for this problem, i.e., an algorithm that, for any given input constant parameter $\eps>0$, computes a ($1-\eps$)-approximate solution for the problem in polynomial time.
Kobayashi and Noguchi~\cite{KN2025} further and substantially simplified the correctness proof of the mentioned PTAS. Exact algorithms and PTASs for TF2M are used as subroutines to design improved approximation algorithms for some network design problems (more details in Section \ref{sec:related}).

Much less is known for the weighted version of the problem. WTF2M is not-known to be NP-hard. Polynomial-time algorithms are known for some special cases of the problem. More specifically, Hartvigsen and Li~\cite{HL} showed that WTF2M can be solved in polynomial-time when the input graph is subcubic, i.e., the maximum degree is at most $3$ (see also \cite{BK} for generalizations and \cite{PW,K2010} for simplifications). Furthermore, Kobayashi~\cite{K2022} proposed a polynomial-time algorithm for the problem when all triangles in the input graph are edge-disjoint. The best-known (folklore) approximation algorithm for WTF2M simply computes a maximum-weight $2$-matching and then drops the cheapest edge of each triangle: this gives a $2/3$ approximation.

\subsection{Our Results and Techniques}

In this paper we present a PTAS for WTF2M, the first improvement over the mentioned trivial $2/3$ approximation.

\begin{theorem}\label{thr:ptas}
There is a PTAS for the Weighted Triangle-Free 2-Matching problem. 
\end{theorem}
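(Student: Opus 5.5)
We will prove Theorem~\ref{thr:ptas} with a local-search algorithm. Fix $\eps>0$, an integer $k=k(\eps)$, and start from any triangle-free 2-matching $M$ (e.g.\ $M=\emptyset$). As long as there is a triangle-free 2-matching $M'$ with $|M\triangle M'|\le k$ and $w(M')\ge(1+\delta)w(M)$ for a suitable $\delta=\delta(\eps,n)$ (say $\delta=\eps/n^2$), replace $M$ by $M'$. Each step can be checked in time $n^{O(k)}$ by enumerating symmetric differences. Because the weight grows geometrically, standard scaling gives polynomially many iterations: round the weights to multiples of $\eps w_{\max}/n$, or equivalently run with $\delta=\eps/(cn)$. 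What remains is the analysis. We must show that if $M$ is $(k,\delta)$-locally optimal, then $w(M)\ge(1-\eps)w(\OPT)$ for an optimal triangle-free 2-matching $\OPT$.

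The analysis looks at the symmetric difference $M\triangle\OPT$. Since both are 2-matchings, every node has degree at most $4$ in it, and it splits into alternating trails and cycles. Following the unweighted argument of Bosch-Calvo--Grandoni--Ameli and Kobayashi--Noguchi, we cut $M\triangle\OPT$ into a family $\mathcal{C}$ of \emph{improving candidates} $C_1,\dots,C_t$. Each has size at most $k$, and each edge is used a bounded number of times, with every $\OPT$-edge covered (fractionally) about $1-\eps$ times in total and every $M$-edge at most once. The construction works as follows:
\begin{enumerate*}[label=(\roman*)]
\item decompose $M\triangle\OPT$ into alternating walks, pairing edges at each node so that $M$- and $\OPT$-edges alternate;
\item split long walks into segments of length about $1/\eps$ by cutting at regularly spaced positions, choosing a random offset so that each $\OPT$-edge is lost with probability at most $\eps$; this averaging is where weights enter, since we lose only an $\eps$ fraction of $w(\OPT)$ in expectation.
\end{enumerate*}
If every $M\triangle C_i$ were a triangle-free 2-matching, local optimality would give $w(C_i\cap\OPT)\le(1+\delta)w(C_i\cap M)$. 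Summing these inequalities yields $(1-\eps)w(\OPT\setminus M)\le(1+\delta)w(M\setminus\OPT)$, which implies the theorem.

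The main obstacle is triangles. Applying a segment $C_i$ may create a triangle in $M\triangle C_i$: two of its edges come from $\OPT$ and one from $M$ outside the segment, or the segment closes a triangle through the endpoints where it was cut. In the unweighted case one can simply drop a constant number of edges per segment and charge them to the segment length. Here the dropped edges may be heavy, so this charging fails. My first attempt is to enlarge each candidate. Whenever $M\triangle C_i$ contains a triangle $T$, we add to $C_i$ the edge of $T$ in $M$ together with a bounded neighbourhood of $\OPT\triangle M$ around it, or we remove from $C_i$ the cheapest $\OPT$-edge of $T$. Since every triangle of $\OPT$ is absent, each created triangle must use an $M$-edge or a cut point, and each node lies in $O(1)$ relevant triangles. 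With a random choice of cut points and of which triangle edge to drop, we aim to show that each $\OPT$-edge is discarded with probability $O(\eps)$. I expect this probabilistic charging argument to be the technical core of the proof. In particular, we must handle structures where many triangles share edges, such as dense $K_4$-like pieces in $M\cup\OPT$. These may require treating such components as single units of size $O(1/\eps)$ that are included in a candidate either entirely or not at all.
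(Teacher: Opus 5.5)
There is a genuine gap, and it sits exactly where the weighted problem becomes hard. You state that the triangles created by applying segments can be repaired while losing only an $O(\eps)$ fraction of $w(\OPT)$ as an aim (``we aim to show'', ``I expect this \dots\ to be the technical core''), but you do not prove it, and the tools you list do not give it.

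\emph{Random offsets.} These only control triangles through cut points, but triangles can also close in the interior of a segment. Take $u_1u_2,u_1u_3,u_2x,u_3y\in M\setminus\OPT$ and $u_2u_3\in\OPT\setminus M$. Suppose your pairing in step (i) matches $u_2u_3$ with $u_2x$ and with $u_3y$, so the trail reads $\dots,xu_2,u_2u_3,u_3y,\dots$, while $u_1u_2,u_1u_3$ lie on another trail. A feasible segment must start and end with $M$-edges, so $u_2u_3$ is always interior to its segment. Hence the segment containing it creates $\{u_1u_2,u_1u_3,u_2u_3\}$ for every offset. Dropping the only $\OPT$-edge of that triangle loses $w(u_2u_3)$ with probability $1$, and replicating the configuration loses a constant fraction of $w(\OPT)$.

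\emph{Enlarging candidates.} Adding $u_1u_2$ to the candidate puts this $M$-edge into two candidates. Then $\sum_i w(C_i\cap M)\le w(M\setminus\OPT)$ fails by an amount you never bound, and the added ``neighbourhood'' may close new triangles.

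\emph{Atomic units.} Triangle-connected pieces can be arbitrarily long. With $\OPT=\{v_iv_{i+1}\}$ and $M=\{v_iv_{i+2}\}$, both triangle-free, the triangles $\{v_i,v_{i+1},v_{i+2}\}$ form a single chain of length $n-2$, so these pieces must be cut anyway.

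So with an arbitrary pairing the charging fails. Showing that a suitable triangle-respecting decomposition always exists is precisely the missing argument.

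The paper needs no charging of triangles. It proves a stronger statement for $\calT$-free $2$-matchings (an arbitrary list $\calT$ of forbidden triangles) in graphs with self-loops, with length measured by $|P|+2\selfloop(P)$, by induction on $|\calT|$. Each step deletes at least one triangle from $\calT$ via a local gadget that keeps $w(A_2)$ and does not decrease $w(A_1)$. The gadgets are: splitting a node into two copies joined by a zero-weight edge of $A_1\cap A_2$; replacing two edges of a triangle by its third edge plus a self-loop of weight $w(u_1u_2)+w(u_1u_3)-w(u_2u_3)$; or swapping opposite edges of the $4$-cycle of a diamond. A trail in the reduced instance then maps back to a valid trail (self-loop $\leftrightarrow$ the three triangle edges, which preserves $|P|+2\selfloop(P)$). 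In this way the edges of a triangle move together with exact weight bookkeeping instead of being dropped. Averaging over segments happens only once, in the triangle-free base case, where nothing is lost.

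Two smaller points, both easily repaired (e.g.\ as in the paper, by rounding weights to integers in $\{0,\dots,\floor{n/\eps}\}$ and requiring strict improvement):
\begin{enumerate*}[label=(\alph*)]
\item With a multiplicative threshold, local optimality gives $w(C_i\cap\OPT)-w(C_i\cap M)<\delta w(M)$, an additive error to be summed over all candidates, not $w(C_i\cap\OPT)\le(1+\delta)w(C_i\cap M)$.
\item Starting from $M=\emptyset$, the multiplicative rule gives no geometric-growth bound on the number of iterations.
\end{enumerate*}
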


In more detail, let $\OPT(G)$ denote some optimal solution to the considered WTF2M instance $G$, and $\opt(G)=w(\OPT(G))$ denote its weight. We simply use $\OPT$ and $\opt$ when $G$ is clear from the context. 

Our PTAS is based on a simple local-search algorithm, that works as follows.
Given a triangle-free $2$-matching $M$, \textit{an augmenting trail $P$ for $M$} is a trail such that $M\Delta P$ is a triangle-free $2$-matching of weight strictly larger than $M$. Here, $\Delta$ denotes the symmetric difference. Given a trail $P$, by $|P|$ we denote its number of edges. At each step our algorithm finds an augmenting trail $P$ with bounded $|P|$, if any, and uses it to increase the weight of the current solution $M$ (initially empty); see Algorithm \ref{alg:PTAS} for the details.

\begin{algorithm}
    \caption{Local Search for WTF2M.\\
    \textbf{Input:} A constant $\eps\in(0, 1]$ and an undirected edge-weighted graph $G=(V,E,w)$.\\
    \textbf{Output:} A triangle-free $2$-matching $\APX$ of $G$ satisfying $w(\APX)\geq (1-\eps)\opt(G)$.}
    \begin{algorithmic}[1]
    \State $\APX\gets \emptyset$
    \While{there exists an augmenting trail $P$ for $\APX$ with $|P|\leq 7/\eps$}
    \State $\APX\gets \APX\Delta P$
    \EndWhile
    \State \textbf{return} $\APX$
    \end{algorithmic}
    \label{alg:PTAS}
\end{algorithm}

The key to prove that the above algorithm works is the following theorem. 
\begin{theorem}\label{thm:main}
    Let $G = (V, E, w)$ be a simple weighted graph, $\APX$ be a triangle-free $2$-matching of $G$, and $\eps\in(0, 1]$ be a constant. If $(1-\eps)\opt(G) > w(\APX)$, then there exists an augmenting trail $P$ for $\APX$ with $|P|\leq 7/\eps$. 
\end{theorem}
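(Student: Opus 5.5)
We argue by contraposition on the symmetric difference $D := \APX \Delta \OPT$. Since both $\APX$ and $\OPT$ are $2$-matchings, every node has degree at most $4$ in $D$, and the edges of $D$ incident to a node split into at most two $\APX$-edges and at most two $\OPT$-edges. The plan is to decompose $D$ into a family $\mathcal{C}$ of alternating trails and closed trails. Alternation is meant in the loose sense that at each internal node we pair an $\OPT$-edge with an $\APX$-edge whenever possible. Then $\sum_{C\in\mathcal{C}} \bigl(w(C\cap\OPT)-w(C\cap\APX)\bigr) = \opt - w(\APX) > \eps\,\opt$.

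\textbf{Step 1: cut into short pieces.} Each long component of $\mathcal{C}$ is cut into pieces of length between roughly $3/\eps$ and $6/\eps$, with cut points chosen via a standard averaging argument over shifts. The aim is that the total $\OPT$-weight lost at the cut points is at most about $\eps\,\opt/2$; since each edge lies in a cut position for only a $O(\eps)$ fraction of shifts, this is achievable. We choose cuts so that we drop $\OPT$-edges, not $\APX$-edges, at the boundaries. That guarantees every resulting piece $P$ still yields a $2$-matching $\APX\Delta P$: at an endpoint we only remove $\APX$-edges or add a single $\OPT$-edge at a node whose $\APX$-degree dropped. After the averaging, some piece $P$ with $|P|\le 6/\eps$ has $w(P\cap\OPT)>w(P\cap\APX)$, so $\APX\Delta P$ is a heavier $2$-matching.

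\textbf{Step 2: enforce triangle-freeness, the main obstacle.} $\APX\Delta P$ may contain a triangle $T$. Such a triangle must use at least one edge of $P\cap\OPT$. Since $\OPT$ is triangle-free, it must also use an edge of $\APX\setminus P$ or an $\OPT$-edge that $P$ does not add. I would handle this by charging.

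Whenever adding the pieces creates a triangle, we either extend the piece to also remove an $\APX$-edge of $T$ lying outside $P$, or drop from $P$ the cheapest $\OPT$-edge of $T$. The extension touches the neighboring piece along $D$, costing a constant number of extra edges, which is where the slack between $6/\eps$ and $7/\eps$ goes. Each deleted $\OPT$-edge is charged against the other two triangle edges, which are heavier. To keep the total loss within the remaining $\eps\,\opt/2$ budget, I would define the cut positions so that triangles straddling boundaries are rare, again by averaging over shifts. The key structural fact to prove is this: because $\OPT$ and $\APX$ are both triangle-free, every triangle in $\APX\Delta P$ must have two vertices within distance $O(1)$ of an endpoint of $P$ or of a degree-$4$ node of $D$ in $P$. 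Consequently only $O(1)$ triangles per piece need fixing, and they can be fixed locally.

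Verifying this locality claim is the delicate case analysis I expect to dominate the proof. It is especially delicate at degree-$4$ nodes of $D$, where the trail revisits a vertex and could close a triangle using two edges of $P$ itself. I would handle those by choosing the pairing at degree-$4$ nodes so as to avoid creating a triangle within a single trail; this is always possible because there are two pairings available.

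Combining the two steps, averaging yields a piece $P$ with $|P|\le 7/\eps$ such that $\APX\Delta P$ is triangle-free and strictly heavier than $\APX$. This $P$ is the required augmenting trail, proving Theorem~\ref{thm:main}.
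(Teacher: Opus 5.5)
Your Step~1 is sound (it is essentially Lemma~\ref{lem:baseCase}). Step~2 has a genuine gap: the locality claim is false, and so is the conclusion that only $O(1)$ triangles per piece need fixing.

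Triangles in $\APX\Delta P$ also arise at interior nodes of $P$ of $D$-degree $3$, where a different trail of the decomposition ends. For instance, let $\APX$ be the path $a_1c_1b_1a_2c_2b_2\cdots a_kc_kb_k$ and $\OPT=\{a_ib_i,\,a_ic_i : 1\le i\le k\}$, say with $w(a_ib_i)=1$ and all other weights $0$. Both are triangle-free $2$-matchings, $a_ic_i\in\APX\cap\OPT$, and $D$ is the alternating path $a_1b_1,b_1a_2,a_2b_2,\dots,a_kb_k$ plus the pendant $\APX$-edges $b_ic_i$. Your pairing rule allows pairing $a_ib_i$ with $b_ia_{i+1}$ at $b_i$. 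Then every piece $P$ that traverses $a_ib_i,b_ia_{i+1}$ has $\{a_ib_i,b_ic_i,a_ic_i\}\subseteq\APX\Delta P$. So essentially every $\OPT$-edge of every piece closes a triangle: there are $\Theta(|P|)$ triangles per piece, none near an endpoint of $P$, and there are no degree-$4$ nodes at all.

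Your repairs do not rescue this. Dropping an interior $\OPT$-edge splits $P$ into two trails, and here it would delete every $\OPT$-edge. The charging ``against the other two triangle edges, which are heavier'' is unjustified: when $T$ has one $\OPT$-edge, the other two are $\APX$-edges of arbitrary weight (weight $0$ here). Moreover, any charging of this type bounds the loss only by a constant fraction, as in the folklore $2/3$ argument, never by $O(\eps)\opt$. Finally, an $\APX$-edge hanging off an interior node of $P$ cannot be appended to the trail.

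Your claim that a triangle-avoiding pairing at degree-$4$ nodes is ``always possible'' is also unproved. For two triangles sharing an edge, the right choice at one endpoint depends on the choice at the other. When the fourth edge $u_1u_4$ is present, the paper settles it only through a weight comparison (Cases 5.1 and 6.1).

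What is missing is a mechanism that makes the decomposition respect the triangles before any averaging is done. In the example the cure is to pair $a_ib_i$ with $b_ic_i$, which is a global, triangle-aware choice at degree-$3$ nodes that your proposal never makes.

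The paper obtains such choices systematically. It proves the stronger Theorem~\ref{thm:strongerMain} for $\calT$-free $2$-matchings, with self-loops allowed, by induction on $|\calT|$. For a forbidden triangle, or two forbidden triangles sharing an edge, it does one of the following:
\begin{itemize}
\item exhibits an augmenting closed trail with at most four edges;
\item splits a vertex, so that the triangle constraint becomes a degree constraint (in the example, Case~2 splits $b_i$, forcing any admissible trail through $a_ib_i$ to continue with $b_ic_i$);
\item replaces two edges of the triangle by the third edge plus a self-loop of adjusted weight;
\item modifies $A_1$ or $A_2$ locally under adjusted weights, keeping $w(A_2)$ and not decreasing $w(A_1)$.
\end{itemize}
All averaging then happens only in the triangle-free base case (Lemma~\ref{lem:baseCase}) on the reduced graph, and the resulting trail lifts back to $G$. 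The $7/\eps$ bound comes from $|P|+2\selfloop(P)\le 2|P|+1$ with $|P|\le 3/\eps$; the self-loop term accounts for each self-loop expanding into a $3$-edge path when lifted.
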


Theorem~\ref{thm:main} implies that Algorithm~\ref{alg:PTAS} returns a $(1-\eps)$-approximate solution. However, it is not obvious that it runs in polynomial time. To overcome this difficulty, we reduce the problem to the case with polynomially bounded integer weights; see Section~\ref{sec:reduction} for details.

Our proof for Theorem~\ref{thm:main} follows a line similar to that of~\cite{KN2025}, which is based on graph reductions. The key ingredient is a generalization of triangle-freeness, which is called $\mathcal{T}$-freeness.
In more detail, we generalize the triangle-free constraint by introducing a list $\mathcal{T}$ of triangles that must not be included in the desired 2-matching.
By transforming the graph, we reduce the input graph to one with a smaller size of $\mathcal{T}$, and by transforming the trail obtained from the reduced graph back onto the original graph, we construct the required trail (see Section~\ref{sec:mainproof}). The desired result is then obtained by setting $\mathcal{T}$ to the set of all the triangles in $G$.

As we will see in the technical part, our approach also works for non-simple graphs that contain self-loops but not parallel edges. On such graphs, we obtain a PTAS for the problem of computing a $\mathcal{T}$-free $2$-matching for any input set $\mathcal{T}$ of triangles.

It remains an important open problem in the area to find an exact polynomial-time algorithm for WTF2M, if one exists.

\subsection{Related Work}
\label{sec:related}

We already mentioned the exact algorithms and PTASs for TF2M. 
An exact algorithm for TF2M was used as a subroutine to design improved approximation algorithms for the $\{1,2\}$-TSP problem, a well-studied special case of TSP where all edge weights are $1$ or $2$. A key observation here is that, given a maximum-cardinality $2$-matching $M$ on the edges of cost $1$, $2n-|M|$ is a lower bound on the minimum cost of a TSP tour. The same lower bound holds if one uses instead a maximum-cardinality triangle-free $2$-matching. In \cite{BS} a maximum-cardinality triangle-free $2$-matching is used as a starting point to build a feasible TSP tour. The triangle-freeness turns out to be a useful feature for technical reasons. A similar approach is also announced in \cite{AMP}.

An exact algorithm for TF2M is also used as a subroutine to design improved approximation algorithms for the $2$-Edge-Connected Spanning Subgraph problem (2ECSS). In this problem we are given an undirected graph, and the goal is to compute a subgraph with the minimum number of edges that contains all the nodes and that is $2$-edge-connected, i.e., it remains connected after removing one arbitrary edge. This problem is very well studied in terms of approximation algorithms \cite{KV,CSS,VV,HVV,SV,GGA}. The latest improved approximation algorithms \cite{HLL,BGGHJL25,KN} work as follows. They first build a maximum-cardinality triangle-free $2$-matching $M$. Then they convert $M$ into a minimum-cardinality triangle-free $2$-edge-cover\footnote{We recall that a 2-edge-cover is a subset of edges $C$ such that each node has at least $2$ edges of $C$ incident to it.} $C$. It can be shown that $|C|$ is a valid lower bound on the size of the optimal solution to the input 2ECSS instance. Finally, $C$ is used as a starting point to build a feasible 2ECSS solution. Also in this case, the lack of triangles in $C$ turns out to be a helpful feature in the construction for technical reasons. Indeed, it would be very helpful to avoid even longer cycles, say, with up to $4$ or $5$ edges. However no exact or sufficiently good approximation algorithms are known for that goal.  
For similar reasons, TF2M could also be used to simplify (and possibly improve) the approximation factor for the related $2$-Vertex-Connected Spanning Subgraph problem \cite{BCGJ23,GVS93,HV17,KV}.

We are not aware of any similar use of the weighted version of the problem, namely WTF2M, for the design of approximation algorithm, though this seems a natural direction for future work.

The Weighted $C_{\leq k}$-Free $2$-Matching problem (WCkF2M) is a natural generalization of WTF2M where we wish to compute a maximum-weight $2$-matching not inducing any cycle with up to $k$ edges. This problem is NP-hard for $k\geq 4$ even in bipartite graphs \cite{Kira}. As we already mentioned, WTF2M, i.e., the case $k=3$, is open. Under the assumption that the weight function satisfies a certain property, called \emph{vertex induced on every square}, 
Weighted $C_{\le 4}$-Free $2$-Matching problem in bipartite graphs can be solved in polynomial-time (see~\cite{Mak07,Tak09,PW21ESA}).

The situation is slightly different for the unweighted version CkF2M of WCkF2M. Papadimitriou showed that the case $k\geq 5$ is NP-hard (his proof is described by Cornu{\'{e}}jols and Pulleyblank in \cite{CP}). The complexity status of the problem is open for $k=4$. 
Hartvigsen~\cite{HARTVIGSEN2006} showed that the case where $k=4$ is polynomial-time solvable in bipartite graphs (see also \cite{Babenko12,Pap2007}).
B\'{e}rczi and V\'{e}gh~\cite{BV10} showed that the same problem can be solved in polynomial time if the input graph is sub-cubic. 

The $C_k$-Free 2-Matching problem is the problem of finding a maximum-cardinality 2-matching not inducing any cycle with exactly $k$ edges (while shorter cycles are allowed). In particular, TF2M is the special case of the above problem with $k=3$. 
B{\'e}rczi and Kobayashi~\cite{BERCZI2012565} showed that the square-free case, i.e., the case $k=4$, is polynomial-time solvable in sub-cubic graphs. Nam~\cite{Nam94} showed that the same problem can be solved in polynomial time if the squares are vertex-disjoint, and this result was generalized by Iwamasa et al.~\cite{IKT2024ESA}.

The problems get easier if we are allowed to take multiple copies of the same edge (i.e., the 2-matching is not simple). In particular, the Weighted Triangle-Free Non-Simple 2-Matching problem can be solved exactly in polynomial time \cite{CP}.

\subsection{Preliminaries}\label{sec:preliminaries}

We use standard graph notation. Let $G= (V, E, w)$ be an edge-weighted graph with $w\colon E\rightarrow \mathds{Q}_{\geq 0}$. 
Throughout this paper, we suppose that a graph may have self-loops but has no parallel edges.
Given a node $u\in V$ and a subset $F\subseteq E$ of the edges, 
the relative degree $d_F(u)$ of $u$ w.r.t.~$F$ is defined as the number of edges in $F$ incident to $v$, where self-loops are counted twice, that is, $d_F(u) = |\{uv\mid uv\in F\}| + 1$ if $uu\in F$ and $d_F(u) = |\{uv\mid uv\in F\}|$ otherwise.
An edge set $M \subseteq E$ is called a \textit{$2$-matching} if $d_M(v) \le 2$ for any $v \in V$. For a subset $F\subseteq E$ of the edges we denote $w(F) = \sum_{e\in F} w(e)$. 

A  \textit{trail $P$} (of length $|P|=k-1$) is a sequence of distinct consecutive edges $u_1u_2,u_2u_3,\ldots,u_{k-1}u_k$. The nodes $u_1,\ldots,u_k$ are not necessarily distinct. We say that $u_1$ and $u_k$ are the \textit{first} and the \textit{last} node of the trail, resp., and they are both \textit{endpoints} of the trail. Notice that we allow $u_1=u_k$, in which case we say that the trail is \emph{closed}.
For a trail $P$ of $G$, we denote by $\selfloop(P)$ the number of self-loops in $P$.
Given a trail $P$, we will also use $P$ to denote the corresponding set of edges $\{u_1u_2,\ldots,u_{k-1}u_k\}$ (neglecting their order); the meaning will be clear from the context.  
For two edge subsets $A_1, A_2 \subseteq E$, an \emph{alternating trail w.r.t.~$(A_1,A_2)$} is a (possibly closed) trail in $A_1\Delta A_2$ that alternates between the edges of $A_1 \setminus A_2$ and $A_2 \setminus A_1$, i.e., $u_iu_{i+1}\in A_1\setminus A_2$ if and only if $u_{i+1}u_{i+2}\in A_2\setminus A_1$ for $i=1,\dots,k-2$.

A \emph{triangle} is an edge set that forms a cycle of length $3$. An edge subset $F \subseteq E$ is called \emph{triangle-free} if it contains no triangle. Let $\mathcal{T}$ be a subset of triangles in $G=(V, E, w)$. 
An edge set $F\subseteq E$ is called $\calT$-\emph{free} if $F$ does not contain any triangle $T\in\mathcal{T}$ as a subset.
Note that when $\mathcal{T}$ is the set of all triangles in $G$, $\mathcal{T}$-freeness is equivalent to triangle-freeness. For a subset $\calT$ of the triangles of $G$ and a subset $F\subseteq E$ of the edges, we define $\calT(F)$ as the set of all triangles in $\calT$ containing at least one edge in $F$. The following observation is used multiple times throughout our argument.

\begin{observation}\label{obs:2edges}
    Let $G$ be a graph, let $\calT$ be a subset of the triangles of $G$, $M$ be a $2$-matching, and $T$ be a triangle of $G$ (possibly, $T\notin\calT$). If $|M\cap T|=2$ then $M$ is $\calT(T)$-free.
\end{observation}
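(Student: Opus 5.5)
The statement to prove is Observation~\ref{obs:2edges}: if $|M\cap T|=2$ for a $2$-matching $M$ and a triangle $T$, then $M$ contains no triangle of $\calT(T)$. The plan is a direct argument by contradiction. Write $T=\{ab,bc,ca\}$ and assume without loss of generality that $M\cap T=\{ab,bc\}$ and $ca\notin M$. Since $\calT(T)$ is defined via edges, note first that $T$ is a genuine triangle, so $a,b,c$ are distinct and none of its edges is a self-loop.

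Suppose some $T'\in\calT(T)$ satisfies $T'\subseteq M$. By definition of $\calT(T)$, $T'$ shares at least one edge with $T$. It cannot contain $ca$, because $ca\notin M$. Hence $T'$ contains $ab$ or $bc$; by symmetry say $ab\in T'$. Then $T'=\{ab,bx,xa\}$ for some node $x\notin\{a,b\}$.

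Now use the degree bound at $b$. The edges $ab$ and $bc$ are in $M$ and both are incident to $b$. Since $d_M(b)\le 2$, $M$ has no further edge at $b$, so the edge $bx\in T'\subseteq M$ must be one of them, i.e.\ $bx=bc$ and $x=c$. No parallel edges are allowed and $bx$ is not a self-loop, which is what makes this identification valid. But then $T'=\{ab,bc,ca\}$ contains $ca\notin M$, a contradiction. The case $bc\in T'$ is symmetric and uses the degree bound at $b$ (or at $c$) in the same way.

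There is no real obstacle here. The only care needed is that the identification $bx=bc$ relies on the graph having no parallel edges, and that self-loops cannot occur in triangles. Both are guaranteed by the conventions in Section~\ref{sec:preliminaries}.
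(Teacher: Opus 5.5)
Your proof is correct and follows essentially the same route as the paper. Both argue by contradiction, observe that the shared edge must be one of the two $M$-edges of $T$, and use the degree bound at their common endpoint to force the offending triangle to coincide with $T$, which is not contained in $M$. (A minor point: the no-parallel-edges convention is really needed when you conclude that the third edge $xa$ of $T'$ is the edge $ca$ itself. Neither the identification $bx=bc$ nor the correctness of the argument depends on where you place that remark.)
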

\begin{proof}
    Let $T = \{u_1u_2, u_2u_3, u_1u_3\}$, and assume w.l.o.g.~that $T\cap M=\{u_1u_2,u_1u_3\}$. Assume to get a contradiction that $M$ is not $\calT(T)$-free, so that $M$ contains a triangle $T'\neq T$ that contains an edge of $T$, say w.l.o.g.~$u_1u_2$. By the degree constraint on $u_1$, the only other edge of $M$ incident to $u_1$ is $u_1u_3$. Then, since $T'\subseteq M$, $T'$ must also contain $u_1u_3$ and thus it is equal to $T$, a contradiction.
\end{proof}

\section{Weight Reduction and Proof of Theorem~\ref{thr:ptas}}
\label{sec:reduction}

As described in Section~\ref{sec:intro}, it is not obvious that Algorithm~\ref{alg:PTAS} runs in polynomial time. To resolve this issue, we reduce the problem to the case where the input instance has integer weights in $\{0,\ldots,\floor{n/\eps}\}$, where $\eps>0$ is some given constant. Here, $n$ denotes the number of vertices in the input graph.

\begin{lemma}\label{lem:red}
For any constants $\eps, \eps'\in (0, 1]$, given a polynomial-time $(1-\eps')$-approximation algorithm for WTF2M under the assumption that edge weights are integers in $\{0,\ldots,\floor{n/\eps}\}$, there is a polynomial-time $(1-\eps)(1-\eps')$-approximation algorithm for WTF2M (with no restrictions).    
\end{lemma}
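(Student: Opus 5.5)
The plan is the standard rounding argument. Let $W=\max_{e\in E} w(e)$; if $W=0$ any solution (e.g.\ $\emptyset$) is optimal, so assume $W>0$. Define scaled integer weights $w'(e)=\floor{\frac{w(e)\, n}{\eps W}}$. These are integers in $\{0,\ldots,\floor{n/\eps}\}$, so the assumed algorithm applies to $G'=(V,E,w')$ and returns, in polynomial time, a triangle-free $2$-matching $M$ with $w'(M)\ge(1-\eps')\opt(G')$. We output $M$. Feasibility is unaffected since only weights change, and computing $w'$ takes polynomial time.

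\textbf{Lower bound on $\opt$.} A single heaviest edge $e^*$ is a triangle-free $2$-matching, including the case where it is a self-loop, since that gives degree $2$. Hence $\opt(G)\ge W$.

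\textbf{Accounting.} Set $\lambda=\frac{\eps W}{n}$, so $w'(e)=\floor{w(e)/\lambda}$ and $\lambda w'(e)\le w(e)<\lambda w'(e)+\lambda$. Any $2$-matching has at most $n$ edges, since the degree sum is at most $2n$ and each edge, including a self-loop, contributes $2$. Therefore, for $\OPT=\OPT(G)$,
\[
\lambda\,\opt(G')\ \ge\ \lambda\, w'(\OPT)\ \ge\ w(\OPT)-n\lambda\ =\ \opt(G)-\eps W\ \ge\ (1-\eps)\opt(G).
\]
Then
\[
w(M)\ \ge\ \lambda w'(M)\ \ge\ (1-\eps')\lambda\,\opt(G')\ \ge\ (1-\eps)(1-\eps')\opt(G).
\]

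\textbf{Main obstacle.} The only delicate point is bounding the total rounding loss by $\eps\,\opt$. This needs two facts: $|M|\le n$ for every $2$-matching, which still holds with self-loops under the paper's degree convention, and $\opt\ge W$. Everything else is routine.
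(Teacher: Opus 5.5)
Your proposal is correct and takes the same approach as the paper: round down to $w'(e)=\floor{\frac{w(e)}{W}\frac{n}{\eps}}$ with $W$ the maximum weight, then use $W\le\opt$ (a single heaviest edge is feasible) and $|\OPT|\le n$ to bound the total rounding loss by $\eps\,\opt$. Your remarks about self-loops are harmless but not needed here, since WTF2M is stated for simple graphs.
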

\begin{proof}
Let $G=(V,E,w)$ be the input instance of WTF2M, and let $\OPT$ be some optimal solution for it.  
Let $e_{\max}$ be an edge with the maximum weight in $G$, and $W:=w(e_{\max})$. We may assume that $W > 0$, since otherwise all edges have weight zero and $w(\OPT)=0$. Note that $\{e_{\max}\}$ is a feasible solution of WTF2M, so $W\leq w(\OPT)$.
Next we define for each 
$e\in E$ the weight $w'(e)=\lfloor \frac{w(e)}{W}\frac{n}{\eps}\rfloor$. To the resulting instance 
$G'=(V,E,w')$ of WTF2M we apply the 
$(1-\eps')$-approximation algorithm with bounded integer weights that is given by the assumption of the lemma, and return the resulting solution $\APX$. Let $\OPT'$ be an optimal solution to $G'$. 
One has that
$$
w'(\OPT')\geq w'(\OPT) \geq \sum_{e \in \OPT} \left( \frac{w(e)}{W}\frac{n}{\eps} - 1 \right) \geq w(\OPT)\frac{n}{W\eps}-n\geq (1-\eps)w(\OPT)\frac{n}{W\eps}, 
$$
where we have used $|\OPT| \leq n$ and $W\leq w(\OPT)$ in the third and the last inequalities, respectively.
Hence
$$
w(\APX)\geq \frac{W\eps}{n}w'(\APX)\geq \frac{W\eps(1-\eps')}{n}w'(\OPT')\geq (1-\eps)(1-\eps')w(\OPT),  
$$
which completes the proof. 
\end{proof}

By combining Theorem~\ref{thm:main} and Lemma~\ref{lem:red}, we obtain a PTAS for WTF2M and prove Theorem~\ref{thr:ptas} as follows.

\begin{proof}[Proof of Theorem \ref{thr:ptas}]
    By Lemma~\ref{lem:red} it is sufficient to show that Algorithm~\ref{alg:PTAS} is a PTAS for the problem with bounded integer weights. The feasibility of the solution is obvious by construction. Clearly, every iteration of the while loop can be executed in $n^{O(1/\eps)}$ time. Each time this loop is executed, the weight of the solution grows by at least $1$ (since all weights are integers). Since the initial solution has weight $0$ and no solution can have weight more than $n^2/\eps$ (corresponding to $n$ edges of maximum possible weight), the number of iterations of the while loop is bounded by $n^2/\eps$. Then, since in each iteration a trail can be found in $n^{O(1/\eps)}$ time, the overall running time is also bounded by $n^{O(1/\eps)}$.
    By Theorem~\ref{thm:main}, as long as $(1-\eps)\opt(G) > w(\APX)$, the value of $\APX$ is updated. This implies that, when the algorithm terminates, $(1-\eps)\opt(G) \leq w(\APX)$.
    \end{proof}

What remains is the proof of Theorem~\ref{thm:main}. In what follows in this paper, we prove the following stronger theorem.

\begin{theorem}\label{thm:strongerMain}
    Let $G = (V, E, w)$ be a weighted graph. Let $\calT$ be a subset of the triangles of $G$, let $A_1, A_2$ be two $\calT$-free $2$-matchings of $G$, and let $\eps\in(0, 1]$ be a constant. If $(1-\eps)w(A_1) > w(A_2)$, then there exists an alternating trail $P$ w.r.t.~$(A_1, A_2)$ such that: \begin{enumerate*}[label=(\arabic*)]
        \item $A_2\Delta P$ is a $\calT$-free $2$-matching,
        \item $w(A_2\Delta P) > w(A_2)$, and
        \item $|P|+2\selfloop(P)\leq 7/\eps$.
    \end{enumerate*} 
\end{theorem}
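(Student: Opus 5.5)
We plan to prove Theorem~\ref{thm:strongerMain} by induction on $|\calT|$, following the graph-reduction strategy of~\cite{KN2025}. In the base case $\calT=\emptyset$ the $\calT$-freeness constraint disappears and we are comparing two plain $2$-matchings. The symmetric difference $A_1\Delta A_2$ decomposes into alternating trails: maximal paths and closed alternating trails, where self-loops account for degree $2$. A standard averaging argument then gives a short improving trail. We cut each long component into pieces of length about $1/\eps$ and discard the cheapest piece boundary edges, paying an $\eps$-fraction of $w(A_1\cap(A_1\Delta A_2))$. Some remaining piece $P$ then satisfies $w(P\cap A_1)>w(P\cap A_2)$ and $|P|+2\selfloop(P)\le 7/\eps$. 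We must make sure that $A_2\Delta P$ keeps all degrees at most $2$. This holds at interior nodes because $P$ alternates. At the endpoints it holds because a piece is cut so that each endpoint either loses an $A_2$ edge or was a component endpoint with spare degree. The weight $2$ given to self-loops in the length bound is absorbed by this counting, since a self-loop plays the role of two half-edges.

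For the induction step, pick a triangle $T=\{u_1u_2,u_2u_3,u_1u_3\}\in\calT$. If neither $A_1$ nor $A_2$ contains $T$, and a trail found for $\calT\setminus\{T\}$ could only create $T$ in $A_2\Delta P$, we instead modify the graph. We build a gadget graph $G'$ in which $T$ is replaced, for instance by contracting $T$ to a single node carrying a self-loop, or by adding a new node attached to $u_1,u_2,u_3$. The weights are chosen so that $\calT$-free $2$-matchings of $G$ that meet $T$ in a controlled way correspond to $(\calT\setminus\{T\})$-free $2$-matchings of $G'$ of equal weight. Observation~\ref{obs:2edges} is what makes this correspondence clean. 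If $|A_i\cap T|=2$, then $A_i$ is automatically $\calT(T)$-free, so we can split into cases by $|A_1\cap T|$ and $|A_2\cap T|$, each in $\{0,1,2\}$. In each case we map $A_1,A_2$ to $A_1',A_2'$ in $G'$ with $w(A_i')=w(A_i)$ and apply induction. We then lift the resulting trail $P'$ back to an alternating trail $P$ in $G$ and check properties (1)--(3). For (3), each self-loop in $P'$ corresponds to at most two extra edges of $T$ in $P$, which is exactly why the self-loop term appears in the length measure: $|P|+2\selfloop(P)\le |P'|+2\selfloop(P')$.

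The main obstacle will be designing the gadget so that the lifting is valid in every case. The lifted $P$ must stay alternating, must not create a triangle of $\calT$ that shares edges with $T$, and must not increase the measure $|P|+2\selfloop(P)$. Triangles of $\calT$ that overlap $T$ in an edge make the reduction delicate, since contracting $T$ can merge them or turn them into parallel edges. My first attempt is to choose $T$ so that the case analysis is as simple as possible, for example a triangle with $|A_2\cap T|\ge 1$. Any parallel edges that arise we would resolve by keeping only the heavier one. If that fails, the fallback is a vertex-splitting gadget that keeps the graph simple apart from self-loops.
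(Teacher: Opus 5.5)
\textbf{There is a genuine gap: the induction step is not actually proved.} Your base case is fine. Cutting each long component of $A_1\Delta A_2$ along the cheapest residue class of $A_1$-edges is a standard alternative to the paper's averaging over all windows of length $2\lceil 1/\eps\rceil-1$; closed even trails must be cut cyclically, as the paper does. You also correctly see why $|P|+2\selfloop(P)$ is unchanged when a loop is expanded into the two extra edges of a triangle.

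The induction step, however, is essentially the whole content of the theorem, and you leave it undesigned (you call it ``the main obstacle''). The concrete gadgets you do name would fail. Contracting $T$ to one node with a loop turns three nodes of total capacity $6$ into one node of capacity $2$. So the images of $A_1,A_2$ are in general not $2$-matchings: for instance, $A_1$ may contain $u_2u_3$, two further edges at $u_1$, and one further edge at each of $u_2,u_3$. Keeping only the heavier of the parallel edges $u_2x,u_3x$ discards $2$-matchings that use both, so neither $w(A_i')=w(A_i)$ nor the lifting survives. You also never isolate the trivial case: if $T\not\subseteq A_1\cup A_2$, simply drop $T$ from $\calT$, because $A_2\Delta P\subseteq A_1\cup A_2$. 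After that, $\calT$-freeness forces $|A_1\cap T|=2$ or $|A_2\cap T|=2$, which is what makes local reductions possible at all; a blind split over $\{0,1,2\}^2$ misses this.

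Three ideas are missing.

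\emph{First}, you must delete from $\calT$ not only $T$ but all of $\calT(\{u_1u_2,u_1u_3\})$. The gadget must then guarantee that the lifted $A_2\Delta P$ contains either both of $u_1u_2,u_1u_3$ or neither, so that Observation~\ref{obs:2edges} disposes of every deleted triangle at once. The paper does this in two ways. When $|A_1\cap T|=|A_2\cap T|=2$, it splits off a new node $u_1'$ of degree $2$, joined to $u_1$ by a zero-weight edge. When $|A_i\cap T|=2$ and $A_i\Delta T$ is $\calT$-free, it replaces $u_1u_2,u_1u_3$ in $A_i$ by $u_2u_3$ and a loop $u_1u_1$, reweighted (e.g.\ $w'(u_1u_1)=w(u_1u_2)+w(u_1u_3)-w(u_2u_3)$) so that a loop in $P'$ lifts to the triangle.

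\emph{Second}, these exchanges change $A_1$ or the weights, so insisting on $w(A_i')=w(A_i)$ is too rigid. You need a dichotomy: either an explicit trail of at most four edges on $T$ (plus $u_1u_1$ or $u_1u_4$) is already augmenting, or the reduced instance has $w'(A_2')=w(A_2)$ and $w'(A_1')\ge w(A_1)$. The latter preserves the hypothesis $(1-\eps)w'(A_1')>w'(A_2')$ and also keeps the new loop weight nonnegative.

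\emph{Third}, when no $\calT$-free exchange is available, you must prove, using that the earlier cases fail, two facts. There is a second triangle $T'=\{u_2u_3,u_3u_4,u_2u_4\}\in\calT$ with $u_1u_2,u_1u_3,u_2u_4,u_3u_4$ all in the same $A_i$. Moreover, every triangle of $\calT(T\cup T')$ has its nodes in $\{u_1,u_2,u_3,u_4\}$. Only then does a reduction go through: a $4$-cycle exchange if $u_1u_4\in A_1\cup A_2$, and otherwise splitting both $u_2$ and $u_3$.

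Without these ingredients, your induction step is a plan rather than a proof.
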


Recall that $G$ may have self-loops but has no parallel edges.
Notice that we can derive Theorem~\ref{thm:main} from Theorem~\ref{thm:strongerMain} by setting $A_1 = \OPT$ and $A_2 = \APX$.
To prove Theorem~\ref{thm:strongerMain}, we first consider the case of $\calT = \emptyset$ in Section~\ref{sec:basecase}, and then consider the general case in Section~\ref{sec:mainproof}. 

\section{Case with No Forbidden Triangles}
\label{sec:basecase}

In this section, we prove a special case of Theorem~\ref{thm:strongerMain} where $\calT = \emptyset$.

\begin{lemma}\label{lem:baseCase}
    Let $G = (V, E, w)$ be an edge-weighted graph, let $A_1, A_2$ be two 2-matchings of $G$, and let $\eps\in(0, 1]$ be a constant. If $(1-\eps)w(A_1) > w(A_2)$, then there exists an alternating trail $P$ w.r.t. $(A_1, A_2)$ such that: \begin{enumerate*}[label=(\arabic*)]
        \item $A_2\Delta P$ is a 2-matching,
        \item $w(A_2\Delta P) > w(A_2)$, and
        \item $|P|\leq 3/\eps$.
    \end{enumerate*} 
\end{lemma}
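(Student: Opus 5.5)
We decompose the symmetric difference $A_1\Delta A_2$ into alternating trails and then cut long trails into short pieces; averaging shows that one short piece must be improving.

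\textbf{Step 1: decomposition.} Since both $A_1$ and $A_2$ are 2-matchings, every node has at most two incident edges of $A_1\setminus A_2$ and at most two of $A_2\setminus A_1$ (a self-loop counts twice). At each node we pair each $A_1$-edge with a distinct $A_2$-edge as far as possible. Following these pairings splits $A_1\Delta A_2$ into edge-disjoint alternating trails $Q_1,\dots,Q_m$ w.r.t.\ $(A_1,A_2)$, some closed and some open. A closed trail of odd length can only come from a self-loop being traversed; in that case the parity at the closing node is handled by allowing the trail to begin and end at the same node with edges of the same set. Because of the pairing, for every node $v$ and every trail $Q_j$ the following holds: if $Q_j$ has an endpoint at $v$ with an $A_1$-edge, then $v$ has spare capacity in $A_2$. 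Precisely, $d_{A_2}(v)$ drops by the number of $A_2$-edges of $Q_j$ at $v$ and rises by the number of $A_1$-edges there, and the net change is at most $+1$ only at an endpoint whose end-edge lies in $A_1$. The pairing was maximal, so an unmatched $A_1$-edge at $v$ means $v$ has fewer $A_2\setminus A_1$ edges than $A_1\setminus A_2$ edges, hence $d_{A_2}(v)<2$. Consequently $A_2\Delta Q_j$ is a 2-matching, and so is $A_2\Delta Q$ for any contiguous subtrail $Q$ of $Q_j$ whose end-edges (at cut points) lie in $A_2$. At such a cut point the degree only decreases.

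\textbf{Step 2: averaging.} We have $\sum_j \big(w(Q_j\cap A_1)-w(Q_j\cap A_2)\big)=w(A_1)-w(A_2)>\eps\, w(A_1)\ge \eps\sum_j w(Q_j\cap A_1)$. Hence some trail $Q=Q_j$ satisfies $w(Q\cap A_2)<(1-\eps)w(Q\cap A_1)$. If $|Q|\le 3/\eps$ we take $P=Q$ and are done.

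\textbf{Step 3: cutting a long trail.} This is the main step. Let $k=\lfloor 1/\eps\rfloor$ (or a similar quantity) and assume $|Q|>3/\eps$. We split $Q$ into consecutive windows. We consider $k$ different shifts; for each shift we delete every $k$-th $A_1$-edge of $Q$, i.e., these edges are left unused, and we cut $Q$ at them. The remaining pieces start and end with $A_2$-edges at the cut points, so by Step 1 each piece $P$ gives a 2-matching $A_2\Delta P$. Each piece has length at most about $2k+1\le 3/\eps$. For a closed trail the cuts wrap around consistently. Every $A_1$-edge is deleted in exactly one of the $k$ shifts and every $A_2$-edge is kept in all shifts. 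Summing over all pieces of all shifts therefore gives a total gain of
\[
(k-1)\,w(Q\cap A_1)-k\,w(Q\cap A_2)>\big((k-1)-k(1-\eps)\big)w(Q\cap A_1)\ge 0,
\]
provided $k\eps\ge 1$. So we choose $k=\lceil 1/\eps\rceil$ and check that the piece length, at most $2k+1\le 2/\eps+3$, stays within $3/\eps$. This needs a little care for $\eps$ close to $1$; there a trail of length at most $3$ can be handled directly, possibly by refining the choice of $k$.

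It follows that some piece has strictly positive gain, which proves the lemma. The main obstacle is the bookkeeping at cut points and at self-loops: we must make sure that cutting next to an $A_2$-edge never violates a degree bound, and that the length bound of $3/\eps$ holds exactly.
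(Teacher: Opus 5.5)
Your plan is the paper's plan. You decompose $A_1\Delta A_2$ into individually feasible alternating trails, average to find a trail $Q$ with $w(Q\cap A_2)<(1-\eps)w(Q\cap A_1)$, and cover $Q$ by short subtrails whose cut ends are $A_2$-edges, so that each $A_1$-edge is used $k-1$ times and each $A_2$-edge $k$ times, with $k=\lceil 1/\eps\rceil$. For trails with two unpaired ends, including those starting and ending at the same node with edges of the same set, the pieces of your $k$ shifted cuttings are exactly the paper's family $\calS$, and the argument is correct.

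The gap is the closed case: a trail closed by a paired transition, which the paper calls ``closed and even''. Let its $A_1$-edges be $a_1,\dots,a_n$ in cyclic order. Deleting the $a_i$ with $i\equiv s\pmod{k}$ wraps around consistently only when $k$ divides $n$. Otherwise the cyclic gap between the last and first deleted edge can reach $2k-1$, so one piece has $2k-2$ edges of $A_1$ and length $4k-3$. For example, take $n=2k-1$ and $s=k$: only $a_k$ is deleted. With $\eps=1/4$ and $|Q|=14$, this piece has length $13>3/\eps$.

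Shortening the blocks does not rescue the partition idea. A cyclic cutting whose blocks each contain at most $k-1$ edges of $A_1$ needs at least $\lceil n/k\rceil>n/k$ deletions. For $\eps=1/k$, with all $A_1$-edges of one weight and all $A_2$-edges of weight just below $(1-\eps)$ times it, every such cutting then has negative total gain. Opening the cycle at a node is not allowed either: a piece keeping the $A_1$-end there but not the paired $A_2$-end can raise the degree to $3$. What you need is to drop the union-of-partitions structure and average, as the paper does, over all $n$ cyclic windows of exactly $2k-1$ consecutive edges that start and end with an $A_2$-edge. Since $|Q|\ge 2k$, these are genuine subtrails. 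Each is feasible because degrees only drop at its two ends. Each $A_1$-edge lies in exactly $k-1$ windows and each $A_2$-edge in exactly $k$, so your gain computation goes through unchanged.

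Three smaller remarks. First, your length bound is looser than needed. Every piece or window has at most $k-1$ edges of $A_1$ and at most $k$ of $A_2$, so its length is at most $2k-1<2/\eps+1\le 3/\eps$ for all $\eps\in(0,1]$. No special treatment near $\eps=1$ is needed, and $k=\lfloor 1/\eps\rfloor$ would not give $k\eps\ge 1$.

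Second, in Step 1, odd closed alternating trails need not use a self-loop: take $u_1u_2,u_3u_1\in A_1\setminus A_2$ and $u_2u_3\in A_2\setminus A_1$. Third, a trail with two $A_1$-ends at $v$ requires $d_{A_2}(v)=0$, not merely $d_{A_2}(v)<2$. Both points follow from two facts: the number of unpaired $A_1$-ends at $v$ equals $d_{A_1}(v)-d_{A_2}(v)$ when positive, and under a maximal pairing all unpaired ends at a node have the same type. With these, your maximal-pairing argument is a self-contained proof of the decomposition fact that the paper cites from the literature.
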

\begin{proof}
    Let $\mathcal{P}$ be a partition of $A_1\Delta A_2$ into alternating trails w.r.t.\ $(A_1, A_2)$ such that $A_2\Delta P$ is a 2-matching for every $P\in \mathcal{P}$. The existence of such a partition is a well-known fact, see e.g.~\cite[Lemma 2.1]{KN2025}. It holds that
    
    \begin{align*}
        \sum_{P\in\calP}w(A_2\cap P) &= w(A_2) -w(A_1\cap A_2)\\
        &<(1-\eps)w(A_1)-w(A_1\cap A_2)\\
        &\leq (1-\eps)(w(A_1)-w(A_1\cap A_2))\\
        &=\sum_{P\in\calP} (1-\eps) w(A_1\cap P).
    \end{align*}
    Therefore there exists $P^*\in\calP$ satisfying that $w(A_2\cap P^*)<(1-\eps)w(A_1\cap P^*)$.
    
    Let $m:=\big\lceil \frac{1}{\varepsilon} \big\rceil$. Since $w(A_1\cap P^*) > w(A_2\cap P^*)$, one has that $w(A_2\Delta P^*) - w(A_2) > 0$, and thus $P^*$ satisfies the claim of the lemma if $|P^*|\le 2m-1\le 3/\eps$.
    Now we assume that $P^* = u_1 u_2, u_2 u_3, \dots, u_k u_{k+1}$ satisfies $|P^*| > 2m-1$.
    We consider the following two cases separately. 
    
    Suppose that $P^*$ is closed and even, i.e., $u_1 = u_{k+1}$, and one of $u_1 u_2$ and $u_k u_{k+1}$ is in $A_1$ while the other is in $A_2$. Let $\calS$ be the set of alternating subtrails $P$ of $P^*$ with $|P|=2m-1$ whose first and last edges are in $A_2$ (including subtrails passing through $u_1 = u_{k+1}$). That is, each trail in $\calS$ is of the form $u_i u_{i+1}, u_{i+1} u_{i+2}, \dots, u_{i+2m-2} u_{i+2m-1}$, where the indices are taken modulo $k$.
    It is easy to see that $A_2\Delta P$ is a 2-matching for any $P\in \calS$, because $d_{A_2\Delta P}(v) \le d_{A_2}(v) \le 2$ for any $v \in V$.
    Every edge in $A_1\cap P^*$ is contained in exactly $m-1$ trails in $\calS$, while every edge in $A_2\cap P^*$ is contained in exactly $m$ trails in $\calS$.
    Therefore,
        \begin{align*}
            \sum_{P\in\calS}\left( w(A_2\Delta P)-w(A_2) \right) &= \sum_{P\in\calS} \left( w(A_1\cap P)-w(A_2\cap P) \right) \\
            &=(m-1)w(A_1\cap P^*)-m\cdot w(A_2\cap P^*)\\
            &>m\bigg(\bigg(1-\frac{1}{m}\bigg)-(1-\eps)\bigg)w(A_1\cap P^*)\geq 0.
        \end{align*}
    In the last inequality we have used that $m=\ceil{\frac{1}{\eps}}$ and in the second-last one the fact that $w(A_2\cap P^*)<(1-\eps)w(A_1\cap P^*)$. This implies the existence of $P\in\calS$ such that $w(A_2\Delta P)-w(A_2)>0$, which satisfies the claim of the lemma.

    Suppose now that $P^*$ is not closed or not even, i.e., $u_1\neq u_{k+1}$, or $u_1=u_{k+1}$ and both $u_1u_2$ and $u_ku_{k+1}$ are in $A_i$ for some $i\in\{1, 2\}$.
    Let $\calS$ be the set of alternating subtrails $P$ of $P^*$ satisfying one of the following conditions:
        \begin{itemize}
            \item $|P|=2m-1$, and both the first and last edges of $P$ are contained in $A_2$; or
            \item $|P|<2m-1$,
            where either the first or last edge of $P$ coincides the first or last edge of $P^*$, and the other extremal edge of $P$ is contained in $A_2$ (possibly, $|P|=1$).
        \end{itemize}
    Note that each trail in $\calS$ is of the form $u_i u_{i+1}, u_{i+1} u_{i+2}, \dots, u_{i+\ell-1} u_{i+\ell}$ for some $\ell \le 2m-1$, where the indices are {\it not} taken modulo $k$.
    One sees that $A_2\Delta P$ is a 2-matching for every $P\in \calS$, because $d_{A_2\Delta P}(v) \le \max\{ d_{A_2}(v), d_{A_2 \Delta P^*}(v) \}\le 2$ for every $v \in V$.
    Every edge in $A_1\cap P^*$ is contained in exactly $m-1$ trails in $\calS$, while every edge in $A_2\cap P^*$ is contained in exactly $m$ trails in $\calS$. Then, by the same argument as in the previous case, we see that there exists $P\in\calS$ such that $w(A_2\Delta P)-w(A_2)>0$, which satisfies the claim of the lemma.
\end{proof}

This lemma together with the following observation, proves the special case of Theorem~\ref{thm:strongerMain} in which $\calT = \emptyset$.

\begin{observation}\label{cor:lengthsl}
    For a trail $P$ satisfying the condition in Lemma~\ref{lem:baseCase}, it holds that $|P|+2\selfloop(P)\le \frac{7}{\eps}$. 
\end{observation}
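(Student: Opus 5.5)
Since $P$ is an alternating trail w.r.t.\ $(A_1,A_2)$, the bound $|P|\le 3/\eps$ from Lemma~\ref{lem:baseCase} already covers the term $|P|$. So the whole task is to bound $\selfloop(P)$ by a constant multiple of $1/\eps$. The crude bound $\selfloop(P)\le |P|$ would give $|P|+2\selfloop(P)\le 3|P|\le 9/\eps$, which is not enough. I therefore plan to exploit the structure of self-loops inside a trail.

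The key structural fact is this. A self-loop $uu$ in a trail is traversed as $\dots,xu,uu,uy,\dots$. Since $d_{A_i}(u)\le 2$ and a self-loop contributes $2$ to the degree, a node $u$ with $uu\in A_i$ has no other $A_i$-edge. Alternation forces the edges immediately before and after $uu$ in $P$ to lie in the other matching $A_{3-i}$. Now suppose two self-loops appeared consecutively, that is, with only one edge between them. Then that intermediate edge would have to be a non-loop edge $uv$ joining two distinct loop-nodes. The trail visits $u$ at most once through its loop, since the loop is a single edge and trails have distinct edges. Hence every self-loop in $P$ is preceded and followed by non-loop edges, unless it is the first or last edge of $P$. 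So at most $\lceil |P|/2\rceil$ edges of $P$ are self-loops. In fact, since the neighbours of a loop must be non-loops, at most roughly half of the positions can hold loops.

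This gives
\[
|P|+2\selfloop(P)\le |P|+2\cdot\frac{|P|+1}{2}=2|P|+1 .
\]
I then use the sharper form of the bound on $|P|$ coming from the proof of Lemma~\ref{lem:baseCase}, namely $|P|\le 2m-1$ with $m=\lceil 1/\eps\rceil\le 1/\eps+1$; any $P^*$ of smaller length also satisfies this bound. This yields
\[
2|P|+1\le 4m-1\le \frac{4}{\eps}+3\le \frac{7}{\eps},
\]
where the last step uses $\eps\le 1$.

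The main obstacle I anticipate is making the claim "loops are never adjacent in $P$" rigorous. It needs two facts: a trail has distinct edges, so each loop appears once; and two loops $uu,vv$ with $u\ne v$ cannot be consecutive edges of a trail, because consecutive edges must share a node. This second fact is immediate. Hence between any two loops there is at least one non-loop edge, which gives $\selfloop(P)\le (|P|+1)/2$ directly, and the alternation argument is not even needed.
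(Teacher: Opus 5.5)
Your core argument is correct and reaches the paper's intermediate inequality $\selfloop(P)\le (|P|+1)/2$ by a slightly different, local route. The paper counts globally. Since there are no parallel edges, distinct self-loops of $P$ sit at distinct nodes, and $P$ touches at most $|P|-\selfloop(P)+1$ nodes, so $\selfloop(P)\le |P|-\selfloop(P)+1$. You instead note that consecutive edges of a trail share a node and a node carries at most one loop. Hence no two loops are adjacent in the edge sequence of $P$, which gives $\selfloop(P)\le\lceil |P|/2\rceil$.

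Both routes use the same two ingredients: no parallel edges, and distinct edges in a trail. Yours is, if anything, slightly cleaner, since the paper's node count has to be read as an upper bound (a trail may revisit nodes). Your middle paragraph is muddled, though. A loop--edge--loop configuration such as $uu,uv,vv$ is perfectly possible and yields no contradiction, and the degree and alternation facts about $A_1,A_2$ play no role. Drop that paragraph and keep the argument from your last paragraph.

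The one real flaw is the final step. The observation concerns \emph{any} trail satisfying the conditions of Lemma~\ref{lem:baseCase}, and the only length information there is $|P|\le 3/\eps$. The bound $|P|\le 2m-1$ does not follow from this. For example, with $\eps=1$, a trail with $|P|=3$ satisfies condition (3), but $2m-1=1$. So as written you only prove the claim for the particular trail built inside the lemma's proof. That suffices for the base case of Theorem~\ref{thm:strongerMain}, but the detour is unnecessary. Directly, $2|P|+1\le 6/\eps+1\le 7/\eps$ because $\eps\le 1$, which is exactly how the paper concludes.
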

\begin{proof}
    Since $G$ has no parallel edges, $\selfloop(P)$ is at most the number of nodes of $P$, which is equal to $|P|-\selfloop(P)+1$.
    Therefore, $\selfloop(P)\le\frac{|P|+1}{2}$, so $|P|+2\selfloop(P)\le 2|P|+1\le \frac{7}{\eps}$.
\end{proof}

\section{Proof of Theorem~\ref{thm:strongerMain}}\label{sec:mainproof}
    We prove Theorem~\ref{thm:strongerMain} by induction on $|\calT|$. The base case, i.e., when $\calT=\emptyset$, is given by Lemma~\ref{lem:baseCase} and Observation~\ref{cor:lengthsl}. Suppose that $|\calT|>0$ holds. 
    We consider the following cases, assuming at each case that the previous ones do not hold:
    \begin{description}
        \item[(1)] There exists a triangle $T\in\calT$ not contained in $A_1\cup A_2$.
        \item[(2)] There exists a triangle $T \in \calT$ such that $|A_1 \cap T| = |A_2 \cap T| = 2$. 
        \item[(3)] There exists a triangle $T \in \calT$ such that $|A_1 \cap T| = 2$ and $A_1 \Delta T$ is $\calT$-free. 
        \item[(4)] There exists a triangle $T \in \calT$ such that $|A_2 \cap T| = 2$ and $A_2 \Delta T$ is $\calT$-free. 
        \item[(5)] There exists a triangle $T \in \calT$ such that $|A_1 \cap T| = 2$. 
        \item[(6)] There exists a triangle $T \in \calT$ such that $|A_2 \cap T| = 2$. 
    \end{description}
    For each $T \in \calT$ with $T \subseteq A_1 \cup A_2$, it holds that $|A_1 \cap T| = 2$ or $|A_2 \cap T| = 2$, because $A_1$ and $A_2$ are $\calT$-free.  
    Thus, Cases~\ref{case:trivial},~\ref{case:final1}, and~\ref{case:final2} cover all possible cases, and thus Theorem~\ref{thm:strongerMain} follows from these six cases. Notice that the fact that Cases~\ref{case:A1A2},~\ref{case:2A1A2}, and~\ref{case:A12A2} do not hold is necessary for the proof of Cases~\ref{case:final1} and~\ref{case:final2}. 
    Each case is discussed in the following. In the figures that we will use to illustrate the different cases, solid lines represent edges in $A_1$, while dashed lines represent those in $A_2$.
    
\bigskip

    \begin{caseanalysis}
    \case{There exists a triangle $T\in\calT$ not contained in $A_1\cup A_2$.}\label{case:trivial}
    
    Define $\calT' := \calT\setminus\{T\}$. Since $A_1$ and $A_2$ are both $\calT'$-free $2$-matchings and $|\calT'| < |\calT|$, by the induction hypothesis, there exists an alternating trail $P$ w.r.t.~($A_1$, $A_2$) such that $|P| + 2\selfloop(P)\leq 7/\eps$, $w(A_2\Delta P) > w(A_2)$, and $A_2\Delta P$ is a $\calT'$-free $2$-matching of $G$. Since $T$ is not contained in $A_1\cup A_2$, $A_2\Delta P$ does not contain $T$, hence $A_2\Delta P$ is $\calT$-free. Altogether $P$ satisfies the claim of the theorem.

\bigskip

    \case{There exists a triangle $T\in\calT$ such that $|A_1 \cap T| = |A_2 \cap T| = 2$.}\label{case:A1A2} 

    Let $T=\{u_1u_2, u_2u_3, u_1u_3\}$.
    Since both $A_1$ and $A_2$ are $\calT$-free and Case~\ref{case:trivial} does not hold, we can assume w.l.o.g. that $u_2 u_3 \in A_1 \cap A_2$, $u_1u_2\in A_1\setminus A_2$, and $u_1u_3\in A_2\setminus A_1$ (see Figure~\ref{fig:A1A2} left). We define an auxiliary weighted graph $G' = (V', E', w')$ as follows:
    \begin{itemize}
        \item $V':= V\cup\{u_1'\}$, $E':= (E \setminus\{u_1u_2, u_1u_3\}) \cup\{u_1'u_2, u_1'u_3, u_1u_1'\}$,
        \item $w'(e) = w(e)$ for all $e\in E\setminus\{u_1u_2, u_1u_3\}$,
        \item $w'(u_1'u_2) = w(u_1u_2)$, $w'(u_1'u_3) = w(u_1u_3)$, $w'(u_1u_1') = 0$.  
    \end{itemize}
We also define:
    \begin{itemize}
        \item $A_1':=(A_1\setminus\{u_1u_2\})\cup\{u_1'u_2, u_1u_1'\}$, $A_2':=(A_2\setminus\{u_1u_3\})\cup\{u_1'u_3, u_1u_1'\}$,
        \item $\calT'=\calT\setminus\calT(\{u_1u_2, u_1u_3\})$.
    \end{itemize}
    
    \begin{figure}
        \centering
        \includegraphics{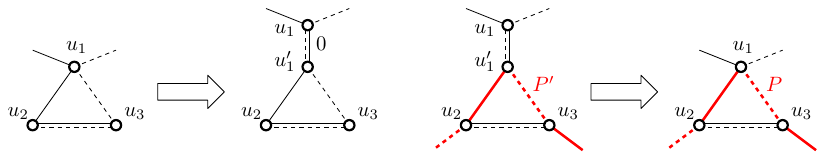}
        \caption{Reduction for Case~\ref{case:A1A2}.} \label{fig:A1A2}
    \end{figure}
    
    Observe that $(1-\eps)w'(A_1')=(1-\eps)w(A_1)>w(A_2)=w'(A_2')$.
    We claim that $A_i'$ is a $\calT'$-free $2$-matching of $G'$, for $i\in\{1, 2\}$. Since $d_{A_i'}(u) = d_{A_i}(u)\leq 2$ for every node $u\in V$, and $d_{A_i'}(u_1') = 2$, $A_i'$ is a $2$-matching of $G'$. We also observe that the only edges in $A_i'$ that are not present in $A_i$ are incident to $u_1'\notin V$, so they are not edges of triangles in $\calT$, and thus they are not edges of triangles in $\calT'$. This, together with the fact that $A_i$ is $\calT$-free implies that $A_i'$ is $\calT'$-free, for $i\in\{1, 2\}$. Thus, $A_i'$ is a $\calT'$-free $2$-matching of $G'$, for $i\in\{1, 2\}$. Also, since $T$ contains both $u_1u_2$ and $u_1u_3$, one has $|\calT'| < |\calT|$. One can apply the induction hypothesis to find an alternating trail $P'$ w.r.t.~$(A_1', A_2')$ such that $|P'| + 2\selfloop(P')\leq 7/\eps$, $w'(A_2'\Delta P') > w'(A_2')$, and $A_2'\Delta P'$ is a $\calT'$-free $2$-matching of $G'$. 

    Let $P$ be the trail in $G$ resulting from replacing $u_1'u_2$ by $u_1u_2$ and $u_1'u_3$ by $u_1u_3$ in $P'$, if any of them is present in $P'$ (see Figure~\ref{fig:A1A2} right). Notice that $P$ is an alternating trail w.r.t.~$(A_1, A_2)$. We claim that $P$ satisfies the conditions of the theorem.
    
    Notice that $|P| + 2\selfloop(P) = |P'| + 2\selfloop(P')\leq 7/\eps$. We now prove that $w(A_2\Delta P) > w(A_2)$. 
    Observe that $A_2$ (resp.,~$A_2\Delta P$) in $G$ is obtained from $A'_2$ (resp.,~$A'_2 \Delta P'$) in $G'$ by contracting $u_1 u'_1$ (i.e., by removing $u_1 u'_1$ and by identifying $u'_1$ with $u_1$). Since $w'(u_1'u_2)=w(u_1u_2)$, $w'(u_1'u_3)=w(u_1u_3)$, and $w'(u_1u_1')=0$, this observation shows that $w(A_2\Delta P) = w'(A_2'\Delta P') > w'(A_2') = w(A_2)$. 
    
    Let us now show that $A_2\Delta P$ is a $2$-matching. Since $A_2\Delta P$ is obtained from $A'_2 \Delta P'$ by contracting $u_1 u'_1$, for every node $u\in V\setminus\{u_1\}$, $d_{A_2\Delta P}(u) = d_{A_2'\Delta P'}(u)\leq 2$. 
    Since the edge $u_1u_1'$ counts toward both $d_{A_2'\Delta P'}(u_1)$ and $d_{A_2'\Delta P'}(u_1')$, but is not present in $A_2\Delta P$, one has that $d_{A_2\Delta P}(u_1) = d_{A_2'\Delta P'}(u_1) - 1 + d_{A_2'\Delta P'}(u_1') - 1 \leq 2$. 
    
    Finally, it is left to show that $A_2\Delta P$ is $\calT$-free. Since $A_2'\Delta P'$ is $\calT'$-free, $A_2\Delta P$ is $\calT'$-free. Now, notice that the edge $u_1'u_2$ belongs to $P'$ only if $u_1'u_3$ also belongs to $P'$ because $d_{A_2'\Delta P'}(u_1')\leq 2$. Thus, if $u_1'u_2\in P'$, both $u_1u_2, u_1u_3\in P$ and $(A_2\Delta P)\cap T=\{u_1u_2, u_2u_3\}$, so $A_2\Delta P$ is $\calT(T)$-free by Observation~\ref{obs:2edges} (implying that it is $\calT(\{u_1u_2, u_1u_3\})$-free). Assume now $u_1'u_2\notin P'$, so $u_1u_2\notin P$ and thus $u_1u_2\notin A_2\Delta P$. If $u_1u_3\in A_2\Delta P$ then $(A_2\Delta P)\cap T=\{u_1u_3, u_2u_3\}$, so $A_2\Delta P$ is $\calT(T)$-free by Observation~\ref{obs:2edges} (implying that it is $\calT(\{u_1u_2, u_1u_3\})$-free). On the other hand, if $u_1u_3\notin A_2\Delta P$, then one has that $u_1u_2, u_1u_3\notin A_2\Delta P$, and thus $A_2\Delta P$ is $\calT(\{u_1u_2, u_1u_3\})$-free. The claim follows.

\bigskip

    \case{There exists a triangle $T\in \calT$ such that $|A_1 \cap T| = 2$ and $A_1 \Delta T$ is $\calT$-free.}\label{case:2A1A2}
    
    Let $T=\{u_1u_2, u_2u_3, u_1u_3\}\in \calT$.
    Since Cases~\ref{case:trivial} and~\ref{case:A1A2} do not hold, we can assume w.l.o.g.~that $\{u_1u_2, u_1u_3\}\subseteq A_1\setminus A_2$ and $u_2u_3 \in A_2 \setminus A_1$. 
    We consider the following subcases. 
    
    \subcase{$u_1u_1\in A_2$.}\label{case:2A1A2loop} 
    If $w(u_1u_2)+w(u_1u_3) > w(u_2u_3)+w(u_1u_1)$ then let $P:=\{u_1u_2, u_2u_3, u_1u_3, u_1u_1\}$. Let us show that $P$ satisfies the claim of the theorem. Indeed, since $w(u_1u_2)+w(u_1u_3) > w(u_2u_3)+w(u_1u_1)$, $w(A_2\Delta P) > w(A_2)$. One has $|P| + 2\selfloop(P) = 6 < 7/\eps$, and since $P$ is an even closed alternating trail, $A_2\Delta P$ is a $2$-matching. Moreover, since $(A_2\Delta P)\cap T=\{u_1u_2, u_1u_3\}$, $A_2\Delta P$ is $\calT(T)$-free by Observation~\ref{obs:2edges}. Since the only edge in $P$ not in $T$ is $u_1u_1$, this implies that $A_2\Delta P$ is $\calT$-free. 
    
    From now on we assume that $w(u_1u_2)+w(u_1u_3) \leq w(u_2u_3)+w(u_1u_1)$. We set (see Figure~\ref{fig:2A1A2loop}):

    \begin{itemize}
        \item $A_1':= (A_1\setminus\{u_1u_2, u_1u_3\})\cup \{u_2u_3, u_1u_1\}$, $A_2':= A_2$,
        \item $w'(e):=w(e)$ for all $e\in E$,
        \item $\calT':=\calT\setminus\calT(\{u_1u_2, u_1u_3\})$.
    \end{itemize}

    Notice that $A_1'$ and $A_2'$ are both $2$-matchings of $G$. Clearly, $A_2'$ is $\calT'$-free, and  
    $A_1 \Delta T$ is $\calT$-free by this case assumption, so $A_1' = (A_1 \Delta T) \cup \{u_1 u_1\}$ is also $\calT'$-free. Thus, $A_i'$ is a $\calT'$-free $2$-matching of $G$, for $i\in\{1, 2\}$. Also, $w'(A_2') = w(A_2)$ and $w'(A_1')= w(A_1) + w(u_2u_3)+w(u_1u_1) - w(u_1u_2) - w(u_1u_3)\geq w(A_1)$, because $w(u_1u_2)+w(u_1u_3) \leq w(u_2u_3)+w(u_1u_1)$. By the assumption of the theorem, one has $(1-\eps)w'(A_1')\geq (1-\eps)w(A_1) > w(A_2) = w'(A'_2)$.
    
    \begin{figure}
        \centering
        \includegraphics{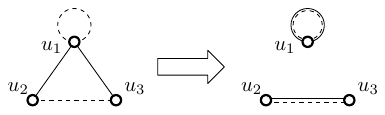}
        \caption{Reduction for Case~\ref{case:2A1A2loop}}
        \label{fig:2A1A2loop}
    \end{figure}
    
    Since $A_1'$ and $A_2'$ are both $\calT'$-free $2$-matchings and $(1-\eps)w'(A_1') > w(A_2')$, one can apply the induction hypothesis to find an alternating trail $P'$ w.r.t.~$(A_1', A_2')$ such that $|P'| + 2\selfloop(P')\leq 7/\eps$, $w'(A_2'\Delta P')>w'(A_2')$, and $A_2'\Delta P'$ is a $\calT'$-free $2$-matching of $G'$. Let $P:=P'$, and notice that $P$ is an alternating trail w.r.t.~$(A_1, A_2)$. We claim that $P$ satisfies the conditions of the theorem. Observe that $|P| + 2\selfloop(P) = |P'| + 2\selfloop(P')\leq 7/\eps$ and $w(A_2\Delta P)=w'(A_2'\Delta P')>w'(A_2')=w(A_2)$.

    Since $A_2'=A_2$ and $P'=P$, it holds that $d_{A_2\Delta P}(u) = d_{A_2'\Delta P'}(u)\leq 2$ for every $u\in V$, so $A_2\Delta P$ is a $2$-matching. Moreover, since $A_2'\Delta P'$ is $\calT'$-free, so is $A_2\Delta P$. Also, since $u_1u_2, u_1u_3\notin A_1'\cup A_2'$, one has $u_1u_2, u_1u_3\notin P$, and thus $u_1u_2, u_1u_3\notin A_2\Delta P$. Therefore, $A_2\Delta P$ is $\calT(\{u_1u_2, u_1u_3\})$-free. The claim follows.

    \subcase{$u_1u_1\notin A_2$.}\label{case:2A1A2:red} 
    Notice that $u_1u_1\notin A_1$, because of the degree constraints on $u_1$.
We define
    \begin{itemize}
        \item $A_2':= A_2$,
        \item $w'(e):=w(e)$ for all $e\in E\setminus\{u_1u_1\}$,
        \item $\calT':=\calT\setminus\calT(\{u_1u_2, u_1u_3\})$.
    \end{itemize}    
If $w(u_1u_2)+w(u_1u_3)-w(u_2u_3)\geq 0$, in case $u_1u_1\notin E$, we add $u_1u_1$ to $E$ with some arbitrary non-negative weight. Furthermore, we set
    \begin{itemize}
        \item $A_1':= (A_1\setminus\{u_1u_2, u_1u_3\})\cup \{u_2u_3, u_1u_1\}$,
  \item $w'(u_1u_1):=w(u_1u_2)+w(u_1u_3)-w(u_2u_3)$.
    \end{itemize}
Otherwise, if $u_1u_1\in E$, we remove it from $E$. Furthermore, we set
    \begin{itemize}
        \item $A_1':= (A_1\setminus\{u_1u_2, u_1u_3\})\cup \{u_2u_3\}$.
    \end{itemize}
    
    Notice that $A_1'$ and $A_2'$ are both $2$-matchings of $G$. Clearly, $A_2'$ is $\calT'$-free, and 
    $A_1 \Delta T$ is $\calT$-free by this case assumption, so $A_1' \subseteq (A_1 \Delta T) \cup \{u_1 u_1\}$ is also $\calT'$-free. 
    Note also that $w'(A_2')=w(A_2)$ and $w'(A_1') \geq w(A_1)$.
        
    \begin{figure}
        \centering
        \includegraphics{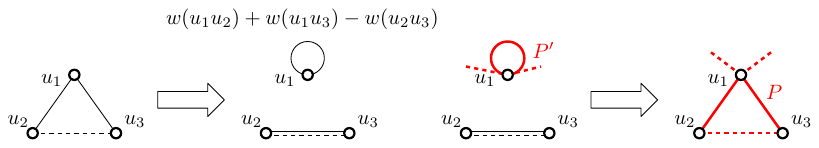}
        \caption{Reduction for Case~\ref{case:2A1A2:red}.}
        \label{fig:2A1A2}
    \end{figure}

    Since $A_1'$ and $A_2'$ are both $\calT'$-free $2$-matchings and $w'(A_2')=w(A_2) <(1-\eps)w(A_1)\le (1-\eps)w'(A_1')$, one can apply the induction hypothesis to find an alternating trail $P'$ w.r.t.~$(A_1', A_2')$ such that $|P'| + 2\selfloop(P')\leq 7/\eps$, $w'(A_2'\Delta P')>w'(A_2')$, and $A_2'\Delta P'$ is a $\calT'$-free $2$-matching of $G'$. If $u_1u_1\in P'$, then let $P$ be the alternating trail resulting from inserting the trail $u_1u_2, u_2u_3, u_1u_3$ in place of $u_1u_1$ (see Figure~\ref{fig:2A1A2} right). If $u_1u_1\notin P'$, then let $P:=P'$. Notice that $P$ is an alternating trail w.r.t.~$(A_1, A_2)$. We claim that $P$ satisfies the conditions of the theorem, i.e., $A_2\Delta P$ is a $\calT$-free $2$-matching, $|P| + 2\selfloop(P)\leq 7/\eps$, and $w(A_2\Delta P)>w(A_2)$.

    Notice that $|P| + 2\selfloop(P)=|P'| + 2\selfloop(P')\leq 7/\eps$ holds regardless of whether $u_1u_1 \in P'$ or not.
    If $u_1u_1\notin P'$, since $A_2=A_2'$ and $P=P'$, we have $w(A_2\Delta P)=w'(A_2'\Delta P')>w'(A_2')=w(A_2)$. On the other hand, if $u_1u_1\in P'$, then $A_2\Delta P=((A_2'\Delta P')\setminus\{u_1u_1, u_2u_3\})\cup \{u_1u_2, u_1u_3\}$. Therefore, 
    \begin{align*}
    w(A_2\Delta P)&=w'(A_2'\Delta P')-w'(u_1u_1) - w'(u_2u_3) + w(u_1u_2) + w(u_1u_3) \\
                  &= w'(A_2'\Delta P')>w'(A_2')=w(A_2), 
    \end{align*}
    where in the second equality we have used that $w'(u_1u_1)=w(u_1u_2)+w(u_1u_3)-w(u_2u_3)$ and $w'(u_2u_3) = w(u_2u_3)$.
    In both cases, it holds that $w(A_2\Delta P)>w(A_2)$. 
    
    It is left to show that $A_2\Delta P$ is a $\calT$-free $2$-matching. Recall that $A_2'=A_2$. If $u_1u_1\in P'$, then $P'=P$ and thus $d_{A_2\Delta P}(u)=d_{A_2'\Delta P'}(u)\leq 2$ for all $u\in V$. On the other hand, if $u_1u_1\notin P'$, then $A_2\Delta P=((A_2'\Delta P')\setminus\{u_1u_1, u_2u_3\})\cup \{u_1u_2, u_1u_3\}$. Therefore, $d_{A_2\Delta P}(u)=d_{A_2'\Delta P'}(u)\leq 2$ for all $u\in V$. Thus, $A_2\Delta P$ is a $2$-matching. 
    
    Since $A_2'\Delta P'$ is $\calT'$-free, $A_2\Delta P\subseteq (A_2'\Delta P')\cup \{u_1u_2, u_1u_3\}$, and neither $u_1u_2$ nor $u_1u_3$ is contained in a triangle in $\calT'$, one has that $A_2\Delta P$ is $\calT'$-free. Now, if $u_1u_1\in P'$, then $u_1u_2, u_1u_3, u_2u_3\in P$ and thus $(A_2\Delta P)\cap T = \{u_1u_2, u_1u_3\}$, so $A_2\Delta P$ is $\calT(\{u_1u_2, u_1u_3\})$-free by Observation~\ref{obs:2edges}. If $u_1u_1\notin P'$, then since $P=P'$ contains neither $u_1u_2$ nor $u_1u_3$,  $A_2\Delta P$ is also $\calT(\{u_1u_2, u_1u_3\})$-free. Therefore, $A_2\Delta P$ is $\calT$-free. The claim follows.

\bigskip
    \case{There exists a triangle $T\in \calT$ such that $|A_2 \cap T| = 2$ and $A_2 \Delta T$ is $\calT$-free.}\label{case:A12A2} 

    Let $T=\{u_1u_2, u_2u_3, u_1u_3\}$.
    Since Cases~\ref{case:trivial} and~\ref{case:A1A2} do not hold, we can assume w.l.o.g.~that $\{u_1u_2, u_1u_3\}\subseteq A_2\setminus A_1$ and $u_2u_3 \in A_1 \setminus A_2$. 
    We consider the following subcases.

    \subcase{$u_1u_1\in A_1$.}\label{case:A12A2loop} 
    If $w(u_1u_1) + w(u_2u_3) > w(u_1u_2) + w(u_1u_3)$ then let $P:=\{u_1u_2, u_2u_3, u_1u_3, u_1u_1\}$. Let us show that $P$ satisfies the claim of the theorem. Indeed, since $w(u_1u_1) + w(u_2u_3) > w(u_1u_2) + w(u_1u_3)$, $w(A_2\Delta P) > w(A_2)$. One has $|P| + 2\selfloop(P) = 6 < 7/\eps$, and since $P$ is an even closed alternating trail, $A_2\Delta P$ is a $2$-matching. Moreover,  
    $A_2 \Delta T$ is $\calT$-free by this case assumption, so $A_2 \Delta P = (A_2 \Delta T) \cup \{u_1 u_1\}$ is also $\calT$-free.

    From now on we assume that $w(u_1u_1) + w(u_2u_3) \leq w(u_1u_2) + w(u_1u_3)$. We set (see Figure~\ref{fig:A12A2loop}):

    \begin{itemize}
        \item $A_1'=A_1$, $A_2':= (A_2\setminus\{u_1u_2, u_1u_3\})\cup \{u_2u_3, u_1u_1\}$,
        \item $w'(e):=w(e)$ for all $e\in E\setminus\{u_1u_1, u_2u_3\}$,
        \item $w'(u_1u_1)=w(u_1u_2)+w(u_1u_3)$, $w'(u_2u_3)=0$,
        \item $\calT':=\calT\setminus\calT(\{u_1u_2, u_1u_3\})$.
    \end{itemize}
    
    Notice that $A_1'$ and $A_2'$ are $2$-matchings of $G$. Clearly, $A_1'$ is $\calT'$-free, and $A_2 \Delta T$ is $\calT$-free by this case assumption, so $A_2' = (A_2 \Delta T) \cup \{u_1 u_1\}$ is also $\calT'$-free. 
    Note as well that $w'(A_2')=w(A_2)$ and 
    \begin{align*}
        w'(A_1') &= w(A_1) - w(u_1u_1) - w(u_2u_3) + w'(u_1u_1) + w'(u_2u_3)  \\
                 &= w(A_1) - w(u_1u_1) - w(u_2u_3) + w(u_1u_2) + w(u_1u_3)  \geq w(A_1), 
    \end{align*}
    where in the final inequality we used $w(u_1u_1) + w(u_2u_3) \leq w(u_1u_2) + w(u_1u_3)$.
    Thus, by the assumption of the theorem, $w'(A_2') = w(A_2) < (1-\eps)w(A_1)\leq (1-\eps)w'(A_1')$.
    
    \begin{figure}
        \centering
        \includegraphics{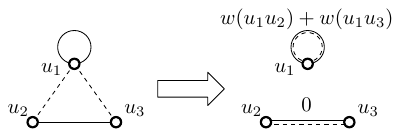}
        \caption{Reduction for Case~\ref{case:A12A2loop}}
        \label{fig:A12A2loop}
    \end{figure}
    
    Since $A_1'$ and $A_2'$ are both $\calT'$-free $2$-matchings and $w'(A_2') < (1-\eps)w'(A_1')$, one can apply the induction hypothesis to find an alternating trail $P'$ w.r.t.~$(A_1', A_2')$ such that $|P'| + 2\selfloop(P')\leq 7/\eps$, $w'(A_2'\Delta P)>w(A_2')$, and $A_2'\Delta P'$ is a $\calT'$-free $2$-matching of $G'$. Let $P:=P'$, and notice that $P$ is an alternating trail w.r.t.~$(A_1, A_2)$. We claim that $P$ satisfies the conditions of the theorem. 
    
    One has $|P| + 2\selfloop(P) = |P'| + 2\selfloop(P')\leq 7/\eps$. Notice that $A_2\Delta P=((A_2'\Delta P')\setminus\{u_1u_1, u_2u_3\})\cup \{u_1u_2, u_1u_3\}$. Thus, 
    \begin{align*}
        w(A_2\Delta P)&=w'(A_2'\Delta P') - w'(u_1u_1) - w'(u_2u_3) + w(u_1u_2) + w(u_1u_3) \\
                      &= w'(A_2'\Delta P') > w'(A_2')=w(A_2),
    \end{align*}
    where in the second equality we used $w'(u_1u_1)= w(u_1u_2) + w(u_1u_3)$ and $w'(u_2u_3)=0$.
    
    It is left to show that $A_2\Delta P$ is a $\calT$-free $2$-matching. Since $A_2\Delta P=((A_2'\Delta P')\setminus\{u_1u_1, u_2u_3\})\cup \{u_1u_2, u_1u_3\}$, it holds that $d_{A_2\Delta P}(u) = d_{A_2'\Delta P'}(u)\leq 2$ for every $u\in V$, implying $A_2\Delta P$ is a $2$-matching. 
    
    Since $A_2'\Delta P'$ is $\calT'$-free, $A_2\Delta P\subseteq (A_2'\Delta P')\cup\{u_1u_2, u_1u_3\}$, and neither $u_1u_2$ nor $u_1u_3$ is contained in a triangle in $\calT'$, one has that $A_2\Delta P$ is $\calT'$-free. Also, since $A_2\Delta P = ((A'_2\Delta P')\setminus\{u_1u_1, u_2u_3\})\cup\{u_1u_2, u_1u_3\}$, one sees that $(A_2\Delta P)\cap T=\{u_1u_2, u_1u_3\}$, so $A_2\Delta P$ is $\calT(\{u_1u_2, u_1u_3\})$-free by Observation~\ref{obs:2edges}, implying that it is $\calT$-free. The claim follows.

    \subcase{$u_1u_1\notin A_1$.}\label{case:A12A2:red} 
        If $w(u_2u_3) > w(u_1u_2) + w(u_1u_3)$ then let $P:=\{u_1u_2, u_2u_3, u_1u_3\}$. Let us show that $P$ satisfies the claim of the theorem. Indeed, since $w(u_2u_3) > w(u_1u_2) + w(u_1u_3)$, $w(A_2\Delta P) > w(A_2)$. One has $|P| + 2\selfloop(P) = 3 < 7/\eps$, and since $d_{A_2 \Delta P}(v) \le d_{A_2}(v) \le 2$ for every $v \in V$, $A_2\Delta P$ is a $2$-matching. Moreover,  
        $A_2 \Delta P = A_2 \Delta T$ is $\calT$-free by this case assumption.

        Next we assume that $w(u_2u_3) \le w(u_1u_2) + w(u_1u_3)$.
        Notice that $u_1u_1\notin A_2$, because of the degree constraint on $u_1$. We assume that $u_1u_1\in E$, otherwise we add it to $E$ with an arbitrary non-negative weight. 
        Let (see Figure~\ref{fig:A12A2}):

    \begin{itemize}
        \item $A_1':=A_1$, $A_2':=(A_2\setminus\{u_1u_2, u_1u_3\})\cup\{u_1u_1, u_2u_3\}$,
        \item $w'(e):=w(e)$ for all $e\in E\setminus\{u_1u_1\}$,
        \item $w'(u_1u_1) := w(u_1u_2) + w(u_1u_3) - w(u_2u_3)$,
        \item $\calT':=\calT\setminus\calT(\{u_1u_2, u_1u_3\})$.
    \end{itemize}
    
    Notice that $A_1'$ and $A_2'$ are $2$-matchings of $G$. Clearly, $A_1'$ is $\calT'$-free, and 
    $A_2 \Delta T$ is $\calT$-free by this case assumption, so $A_2' = (A_2 \Delta T) \cup \{u_1 u_1\}$ is also $\calT'$-free. 
    Notice that $w'(A_1')=w(A_1)$ and $w'(A_2')=w(A_2)$. 
    We also note that $w'(u_1u_1) \geq 0$, because $w(u_2u_3) \le w(u_1u_2) + w(u_1u_3)$.

    \begin{figure}
        \centering
        \includegraphics{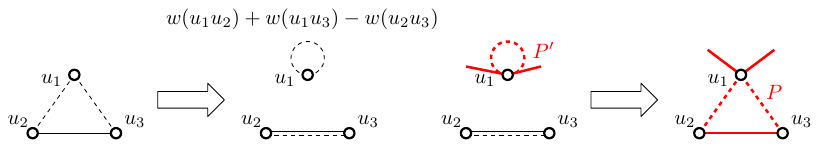}
        \caption{Reduction for Case~\ref{case:A12A2:red}.}
        \label{fig:A12A2}
    \end{figure}

    Since $A_1'$ and $A_2'$ are both $\calT'$-free $2$-matchings and $w'(A_2')=w(A_2) <(1-\eps)w(A_1)= (1-\eps)w'(A_1')$, one can apply the induction hypothesis to find an alternating trail $P'$ w.r.t.~$(A_1', A_2')$ such that $|P'| + 2\selfloop(P')\leq 7/\eps$, $w'(A_2'\Delta P')>w'(A_2')$ and $A_2'\Delta P'$ is a $\calT'$-free $2$-matching of $G'$. If $u_1u_1\in P'$, then let $P$ be the alternating trail resulting from inserting the trail $u_1u_2, u_2u_3, u_1u_3$ in place of $u_1u_1$ (see Figure~\ref{fig:A12A2} right). If $u_1u_1\notin P'$, then let $P:=P'$. Notice that $P$ is an alternating trail w.r.t.~$(A_1, A_2)$. Let us show that $P$ satisfies the conditions of the theorem. 

    Notice that $|P| + 2\selfloop(P)=|P'| + 2\selfloop(P')\leq 7/\eps$ holds regardless of whether $u_1u_1 \in P'$ or not.
    We now prove that $w(A_2\Delta P)>w(A_2)$. If $u_1u_1\notin P'$ then $A_2\Delta P = ((A_2'\Delta P')\setminus\{u_1u_1, u_2u_3\})\cup \{u_1u_2, u_1u_3\}$. Thus, 
    \begin{align*}
    w(A_2\Delta P)&=w'(A_2'\Delta P') - w'(u_1u_1) - w'(u_2u_3) + w(u_1u_2) + w(u_1u_3) \\
                  &= w'(A_2'\Delta P') > w'(A_2')=w(A_2).
    \end{align*}
    On the other hand, if $u_1u_1\in P'$, then $A_2\Delta P = A_2'\Delta P'$. Thus, $w(A_2\Delta P)=w'(A_2'\Delta P') > w'(A_2')=w(A_2)$. 
    In both cases, it holds that $w(A_2\Delta P)>w(A_2)$. It is left to show that $A_2\Delta P$ is a $\calT$-free $2$-matching.
    
    If $u_1u_1\notin P'$, then $A_2\Delta P = ((A_2'\Delta P')\setminus\{u_1u_1, u_2u_3\})\cup \{u_1u_2, u_1u_3\}$, while if $u_1u_1\in P'$, $A_2\Delta P = A_2'\Delta P'$. In both cases, it holds that $d_{A_2\Delta P}(u)=d_{A_2'\Delta P'}(u)\leq 2$ for all $u\in V$. Thus, $A_2\Delta P$ is a $2$-matching. 
    
    Since $A_2'\Delta P'$ is $\calT'$-free, $A_2\Delta P\subseteq(A_2'\Delta P')\cup \{u_1u_2, u_1u_3\}$, and neither $u_1u_2$ nor $u_1u_3$ is contained in a triangle in $\calT'$, one has that $A_2\Delta P$ is $\calT'$-free. Now, if $u_1u_1\notin P'$, then $A_2\Delta P = ((A_2'\Delta P')\setminus\{u_1u_1, u_2u_3\})\cup \{u_1u_2, u_1u_3\}$, and thus $(A_2\Delta P)\cap T = \{u_1u_2, u_1u_3\}$, so $A_2\Delta P$ is $\calT(\{u_1u_2, u_1u_3\})$-free by Observation~\ref{obs:2edges}. If $u_1u_1\notin P'$, then 
    $A_2\Delta P=A_2'\Delta P'$, and thus $A_2\Delta P$ contains neither $u_1u_2$ nor $u_1u_3$, implying that it is $\calT(\{u_1u_2, u_1u_3\})$-free.
    Therefore, $A_2\Delta P$ is $\calT$-free. The claim follows.

\bigskip
    \case{There exists a triangle $T\in \calT$ with $|A_1 \cap T| = 2$.}\label{case:final1} 

    Let $T=\{u_1u_2, u_2u_3, u_1u_3\}$.
    Since Cases~\ref{case:trivial},~\ref{case:A1A2}, and~\ref{case:2A1A2} do not hold, we can assume w.l.o.g.~that there exists a triangle $T'=\{u_2u_4, u_2u_3, u_3u_4\}\in\calT$ with $u_4 \in V \setminus \{u_1, u_2, u_3\}$, $\{u_1u_2, u_1u_3, u_2u_4, u_3u_4\}\subseteq A_1\setminus A_2$, and $u_2u_3\in A_2\setminus A_1$ (see Figures~\ref{fig:final1u1u4} and~\ref{fig:final1} left). 
    Let us first show the following. 

    \begin{claim}\label{clm:vertexSet}
        For every $T''\in \calT(T\cup T')$, the node set of $T''$ is contained in $\{u_1, u_2, u_3, u_4\}$.
    \end{claim}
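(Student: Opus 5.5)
The plan is to exploit the degree constraints of $A_1$ at $u_1,u_2,u_3,u_4$, all of which are saturated inside $T\cup T'$. Since $u_1u_2,u_1u_3\in A_1$, the node $u_1$ has $d_{A_1}(u_1)=2$ and its only $A_1$-edges are $u_1u_2,u_1u_3$. Since $u_1u_2,u_2u_4\in A_1$, the only $A_1$-edges at $u_2$ are $u_1u_2,u_2u_4$. Similarly, the only $A_1$-edges at $u_3$ are $u_1u_3,u_3u_4$, and those at $u_4$ are $u_2u_4,u_3u_4$. In particular, $A_1$ restricted to these four nodes is the 4-cycle $u_1u_2u_4u_3u_1$.

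Next, take any $T''\in\calT(T\cup T')$. It contains an edge $e$ of $T\cup T'=\{u_1u_2,u_1u_3,u_2u_3,u_2u_4,u_3u_4\}$. Since Case~\ref{case:trivial} does not hold, $T''\subseteq A_1\cup A_2$. Since $A_1$ and $A_2$ are $\calT$-free and Case~\ref{case:A1A2} does not hold, $T''$ has exactly two edges in one of $A_1,A_2$.

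The argument splits according to whether $e$ lies in $A_1$ or is $u_2u_3$.

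\textbf{Case $e\in A_1$.} Say $e=u_1u_2$; the other $A_1$-edges are symmetric. Let the third node of $T''$ be $x$, so $T''=\{u_1u_2,u_2x,u_1x\}$. If $x\notin\{u_3,u_4\}$, then $u_1x$ and $u_2x$ are not $A_1$-edges by the saturation above, so both lie in $A_2\setminus A_1$. Then $|A_1\cap T''|=1$ forces $|A_2\cap T''|=2$, with $u_1u_2\notin A_2$. In that situation $A_2\cap T''=\{u_1x,u_2x\}$. The remaining step is to derive a contradiction from the $A_2$-degree at $u_2$, which already has $u_2u_3$ and $u_2x$ in $A_2$; this is consistent, so the contradiction must come from elsewhere. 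I would use that Case~\ref{case:A12A2} fails for $T''$, and that Case~\ref{case:2A1A2} fails for $T$. Concretely, the failure of Case~\ref{case:2A1A2} was exactly what produced $T'$ as the unique blocking triangle, but it is cleaner to argue directly. $T''$ contains $u_2x$ and $u_2u_3$ as its two $A_2$-edges at $u_2$? No — $T''$ contains $u_1u_2,u_2x$. So the genuine point is that $A_2\cap T''=\{u_1x,u_2x\}$ means $x$ is the apex with $A_2$-degree $2$. The triangle $T''$ must then already have been handled by Case~\ref{case:A12A2}, or else it is blocked by another triangle. I expect this to be the main obstacle: ruling out a triangle $u_1u_2x$ with $u_1x,u_2x\in A_2$.

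My first attempt would be to note that $u_1u_2\in T''$ and $A_1\Delta T$ contains $u_2u_3$. I would then rather recheck the Case~\ref{case:final1} hypotheses: the w.l.o.g.\ choice of $T'$ came from $A_1\Delta T$ not being $\calT$-free. That constrains only triangles containing $u_2u_3$, so triangles through $A_1$-edges would need a separate argument, likely again via the failure of Cases~\ref{case:2A1A2} and~\ref{case:A12A2} applied to $T''$ itself.

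\textbf{Case $e=u_2u_3$.} Here $T''=\{u_2u_3,u_2x,u_3x\}$. If $x\notin\{u_1,u_4\}$, then neither $u_2x$ nor $u_3x$ is in $A_1$, by the saturation. Hence both are in $A_2$, and then $u_2$ has $A_2$-edges $u_2u_3$ and $u_2x$, and $u_3$ has $u_2u_3$ and $u_3x$. This would make $T''\subseteq A_2$, contradicting the $\calT$-freeness of $A_2$. So $x\in\{u_1,u_4\}$, which settles this case and gives the claim for triangles through $u_2u_3$. For the first case, I would push the analogous degree-counting argument using $A_2$-saturation at $u_2,u_3$ together with the failure of Case~\ref{case:A12A2}.
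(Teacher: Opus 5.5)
Your treatment of triangles through $u_2u_3$ is correct and is exactly the paper's argument: the $A_1$-saturation at $u_2,u_3$ forces $u_2x,u_3x\in A_2$, so $T''\subseteq A_2$, a contradiction. The gap is in the other case, where $T''$ goes through an $A_1$-edge, say $u_1u_2$, with third node $x\notin\{u_3,u_4\}$. You correctly derive $A_2\cap T''=\{u_1x,u_2x\}$, and you correctly guess that the contradiction should come from Case~\ref{case:A12A2} failing for $T''$. But you never establish the one fact this requires, namely that $A_2\Delta T''$ is $\calT$-free. You leave it as ``the main obstacle'' and point to $A_2$-saturation at $u_2,u_3$ or to the choice of $T'$, and neither of these closes it. As written, this case is unproved.

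The missing step is short. $A_2\Delta T''=(A_2\setminus\{u_1x,u_2x\})\cup\{u_1u_2\}$ is a $2$-matching. Since $A_2$ is $\calT$-free, any triangle of $\calT$ inside $A_2\Delta T''$ must use the new edge $u_1u_2$. Now $(A_2\Delta T'')\cap T=\{u_1u_2,u_2u_3\}$, because $u_1u_3\in A_1\setminus A_2$ and $u_2u_3\in A_2$ is untouched (as $x\neq u_3$). So by Observation~\ref{obs:2edges}, $A_2\Delta T''$ is $\calT(T)$-free, which covers every triangle through $u_1u_2$. Hence $A_2\Delta T''$ is $\calT$-free, and $T''$ satisfies the hypothesis of Case~\ref{case:A12A2}, a contradiction.

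The edges $u_1u_3$, $u_2u_4$, $u_3u_4$ are handled the same way. For $u_1u_3$, apply Observation~\ref{obs:2edges} to $T$. For $u_2u_4$ and $u_3u_4$, apply it to $T'$, where $(A_2\Delta T'')\cap T'$ consists of that edge together with $u_2u_3$. With this step inserted, your proof coincides with the paper's; the detour through why $T'$ was chosen in Case~\ref{case:final1} is unnecessary.
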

    \begin{proof}
        Assume to get a contradiction that there exists a triangle $T''\in \calT(T\cup T')$ such that the node set of $T''$ contains a node $u\in V\setminus\{u_1, u_2, u_3, u_4\}$. Note that $T''$ is contained in $A_1\cup A_2$ because Case~\ref{case:trivial} does not hold. Let $T'' = \{xy, xu, yu\}$, where $xy\in\{u_1u_2, u_1u_3, u_2u_4, u_3u_4, u_2u_3\}$. If $xy = u_1u_2$, then $u_1u$ and $u_2u$ are in $A_2\setminus A_1$ due to the degree constraint of $A_1$. 
        Then, since $(A_2 \Delta T'') \cap T = \{u_1u_2, u_2u_3\}$, $A_2 \Delta T''$ is $\calT(\{u_1u_2\})$-free by Observation~\ref{obs:2edges}. This together with $A_2 \Delta T'' \subseteq A_2 \cup \{u_1u_2\}$ shows that $A_2 \Delta T''$ is $\calT$-free.
        Therefore, $T''$ satisfies the condition in Case~\ref{case:A12A2}, a contradiction. 
        The same argument applies when $xy\in\{u_1u_3, u_2u_4, u_3u_4\}$. If $xy=u_2u_3$, then $u_2u$ and $u_3u$ are in $A_2\setminus A_1$ due to the degree constraint of $A_1$. Then, $T''$ is contained in $A_2$, a contradiction to the assumption that $A_2$ is $\calT$-free.
    \end{proof}

    We consider the following subcases.
    
    \subcase{$u_1u_4\in A_2.$}\label{case:final1u1u4} Notice that $u_1u_4\notin A_1$, because of the degree constraints on $u_1$ and $u_4$. If $w(u_1u_2) + w(u_3u_4) > w(u_1u_4) + w(u_2u_3)$ then let $P:=\{u_1u_4, u_1u_2, u_2u_3, u_3u_4\}$. We claim that $P$ satisfies the conditions of the theorem. Indeed, since $w(u_1u_2) + w(u_3u_4) > w(u_1u_4) + w(u_2u_3)$, $w(A_2\Delta P) > w(A_2)$. One has $|P| + 2\selfloop(P) = 4 < 7/\eps$, and since $P$ is an even closed alternating trail, $A_2\Delta P$ is a $2$-matching. Moreover, by Claim~\ref{clm:vertexSet}, the only possible triangles in $A_2\Delta P$ are $T, T'$, $\{u_1u_2, u_2u_4, u_1u_4\}$, and $\{u_1u_3, u_3u_4, u_1u_4\}$, but none of them is in $A_2\Delta P$, so $A_2\Delta P$ is $\calT$-free. 
    
    From now on we assume that $w(u_1u_2) + w(u_3u_4) \leq w(u_1u_4) + w(u_2u_3)$. We set (see Figure~\ref{fig:final1u1u4}):

    \begin{itemize}
        \item $A_1':=(A_1\setminus\{u_1u_2, u_3u_4\})\cup\{u_1u_4, u_2u_3\}$, $A_2':=A_2$
        \item $w'(e):=w(e)$ for all $e\in E$,
        \item $\calT':=\calT\setminus\calT(T\cup T')$.
    \end{itemize}

    \begin{figure}
        \centering
        \includegraphics{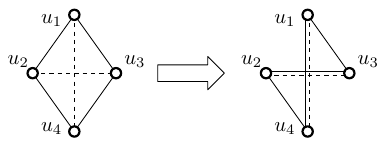}
        \caption{Reduction for Case~\ref{case:final1u1u4}.}
        \label{fig:final1u1u4}
    \end{figure}

    Notice that $A_1'$ and $A_2'$ are both $\calT'$-free $2$-matchings of $G'$. 
    Also, $w'(A_2')=w(A_2)$ and $w'(A_1')=w(A_1) - w(u_1u_2) - w(u_3u_4) + w(u_1u_4) + w(u_2u_3) \geq w(A_1)$ since $w(u_1u_2) + w(u_3u_4) \leq w(u_1u_4) + w(u_2u_3)$. Thus, by the assumption of the theorem, $w'(A_2')=w(A_2)<(1-\eps)w(A_1)\leq (1-\eps)w'(A_1')$. Let us apply the induction hypothesis to find an alternating trail $P'$ w.r.t.~$(A_1', A_2')$ such that $|P'| + 2\selfloop(P')\leq 7/\eps$, $w'(A_2'\Delta P')>w'(A_2')$, and $A_2'\Delta P'$ is a $\calT'$-free $2$-matching of $G'$. Let $P:=P'$, and notice that $P$ is an alternating trail w.r.t.~$(A_1, A_2)$. We claim that $P$ satisfies the conditions of the theorem.

    One has that $|P| + 2\selfloop(P)= |P'| + 2\selfloop(P') \leq 7/\eps$, and since $P=P'$ and $A_2=A_2'$, one also has that $w(A_2\Delta P)=w'(A_2'\Delta P')>w'(A_2')=w(A_2)$. Again because $P=P'$ and $A_2=A_2'$, and since $A_2'\Delta P'$ is a $2$-matching, $A_2\Delta P$ is a $2$-matching. It is left to show that $A_2\Delta P$ is $\calT$-free.

    Since $A_2'\Delta P'$ is $\calT'$-free, so is $A_2\Delta P$. By Claim~\ref{clm:vertexSet}, it is only left to show that $A_2\Delta P$ does not contain $T, T'$, $T'':=\{u_1u_4, u_3u_4, u_1u_3\}$, and $T''':=\{u_1u_4, u_2u_4, u_1u_2\}$. Since $u_1u_2\notin A_2\Delta P$, neither $T$ nor $T'''$ is in $A_2\Delta P$. Since $u_3u_4\notin A_2\Delta P$, neither $T'$ nor $T''$ is in $A_2\Delta P$. The claim follows.
    
    \subcase{$u_1u_4\notin A_2$.}\label{case:final1notu1u4} Notice that $u_1u_4\notin A_1$, because of the degree constraints on $u_1$ and $u_4$. We build an auxiliary graph $G' = (V', E', w')$ 
    by splitting each of $u_2$ and $u_3$ in a similar way to Case~\ref{case:A1A2}. 
    Let (see Figure~\ref{fig:final1} right):
    \begin{itemize}
        \item $V' = V\cup \{u_2', u_3'\}$, $E' = (E\setminus\{u_1u_2, u_3u_4, u_2u_3\})\cup\{u_1u_2', u_3'u_4, u_2'u_3', u_2u_2', u_3u_3'\}$, 
        \item $A_1':=(A_1\setminus\{u_1u_2, u_3u_4\})\cup\{u_1u_2', u_3'u_4, u_2u_2', u_3u_3'\}$, $A_2':=(A_2\setminus\{u_2u_3\})\cup\{u_2'u_3', u_2u_2', u_3u_3'\}$,
        \item $w'(e):=w(e)$ for all $e\in E\setminus\{u_1u_2, u_3u_4, u_2u_3\}$,
        \item $w'(u_1u_2')=w(u_1u_2)$, $w'(u_3'u_4)=w(u_3u_4)$, $w'(u_2'u_3')=w(u_2u_3)$, $w'(u_2u_2')=w'(u_3u_3')=0$,
        \item $\calT' = \calT\setminus\calT(T\cup T')$.
    \end{itemize}

    \begin{figure}
        \centering
        \includegraphics{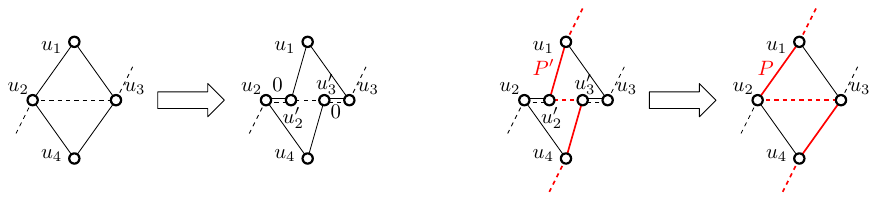}
        \caption{Reduction for Case~\ref{case:final1notu1u4}.}
        \label{fig:final1}
    \end{figure}

    Notice that $A_1'$ and $A_2'$ are both $2$-matchings of $G'$. Moreover, since $A_1$ and $A_2$ are $\calT$-free and the edges added to $A_1$ and $A_2$ are not in $G$, $A_1'$ and $A_2'$ are both $\calT'$-free. Notice as well that $w'(A_1') = w(A_1)$ and $w'(A_2') = w(A_2)$.
    
    Since $A_1'$ and $A_2'$ are both $\calT'$-free $2$-matchings and $w'(A_2')=w(A_2) < (1-\eps)w(A_1)= (1-\eps)w'(A_1')$, one can apply the induction hypothesis to find an alternating trail $P'$ w.r.t.~$(A_1', A_2')$ such that $|P'| + 2\selfloop(P')\leq 7/\eps$, $w'(A_2'\Delta P')>w'(A_2')$ and $A_2'\Delta P'$ is a $\calT'$-free $2$-matching of $G'$. Let $P$ be the trail resulting from replacing $u_1u_2'$ by $u_1u_2$, $u_2'u_3'$ by $u_2u_3$, and $u_3'u_4$ by $u_3u_4$, if any of them is present in $P'$ (see Figure~\ref{fig:final1} right). Notice that $P$ is an alternating trail w.r.t.~$(A_1, A_2)$. We claim that $P$ satisfies the conditions of the theorem. 

    Notice that $|P| + 2\selfloop(P)=|P'| + 2\selfloop(P')\leq 7/\eps$. We now prove that $w(A_2\Delta P)>w(A_2)$. 
    Observe that $A_2$ (resp.,~$A_2\Delta P$) in $G$ is obtained from $A'_2$ (resp.,~$A'_2 \Delta P'$) in $G'$ by contracting both $u_2 u'_2$ and $u_3 u'_3$.
    Since $w(u_1u_2) = w'(u_1'u_2)$, $w(u_2u_3) = w'(u_2'u_3')$, $w(u_3u_4) = w'(u_3'u_4)$, and $w'(u_2u_2') = w'(u_3u_3') = 0$, this observation shows that $w(A_2\Delta P) = w'(A_2'\Delta P')>w'(A_2')=w(A_2)$.
    
    Let us now show that $A_2\Delta P$ is a $2$-matching of $G$. 
    Since $A_2\Delta P$ is obtained from $A'_2 \Delta P'$ by contracting both $u_2 u'_2$ and $u_3 u'_3$, for every node $u\in V\setminus\{u_2, u_3\}$, $d_{A_2\Delta P}(u) = d_{A_2'\Delta P'}(u)\leq 2$. Since the edge $u_2u_2'$ counts toward both $d_{A_2'\Delta P'}(u_2)$ and $d_{A_2'\Delta P'}(u_2')$, but is not present in $A_2\Delta P$, one has that $d_{A_2\Delta P}(u_2) = d_{A_2'\Delta P'}(u_2)-1+d_{A_2'\Delta P'}(u_2')-1\leq 2$. A similar argument works for $u_3$, so that $d_{A_2\Delta P}(u)\leq 2$ for every $u\in V$.

    Finally, it is left to show that $A_2\Delta P$ is $\calT$-free. Since $A_2'\Delta P'$ is $\calT'$-free, so is $A_2\Delta P$. We now show that $A_2\Delta P$ is $\calT(T\cup T')$-free, which implies that it is $\calT$-free. By Claim~\ref{clm:vertexSet} and by this case assumption $u_1u_4\notin A_1\cup A_2$, the only triangles in $\calT$ that are not in $\calT'$ are $T$ and $T'$. Notice that if $u_1u_2'\in P'$, then $u_2'u_3'\in P'$, because of the degree constraints on $u_2'$. This implies that if $u_1u_2\in P$ then $u_2u_3\in P$. In turn this implies that, if $u_1u_2\in A_2\Delta P$, then $u_2u_3\notin A_2\Delta P$. So $T$ is not in $A_2\Delta P$. Similarly, if $u_3u_4\in A_2\Delta P$, then $u_2u_3\notin A_2\Delta P$, so $T'$ is not in $A_2\Delta P$. The claim follows.

    \bigskip
    \case{There exists a triangle $T\in \calT$ with $|A_2 \cap T| = 2$.}\label{case:final2}  

    Let $T=\{u_1u_2, u_2u_3, u_1u_3\}$.   
    Since Cases~\ref{case:trivial},~\ref{case:A1A2}, and~\ref{case:A12A2} do not hold, we can assume w.l.o.g. that there exists a triangle $T'=\{u_2u_4, u_2u_3, u_3u_4\}$ with $u_4 \in V \setminus \{u_1, u_2, u_3\}$, $\{u_1u_2, u_1u_3, u_2u_4, u_3u_4\}\subseteq A_2\setminus A_1$, and $u_2u_3\in A_1\setminus A_2$ (see Figures~\ref{fig:final2u1u4} and~\ref{fig:final2} left). We will use an analogous claim as in the previous case.

    \begin{claim}\label{clm:vertexSet2}
        For every $T''\in \calT(T\cup T')$, the node set of $T''$ is contained in $\{u_1, u_2, u_3, u_4\}$.
    \end{claim}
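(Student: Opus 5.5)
I would mirror the proof of Claim~\ref{clm:vertexSet}, with the roles of $A_1$ and $A_2$ exchanged. Suppose for contradiction that some $T''\in\calT(T\cup T')$ has a node $u\notin\{u_1,u_2,u_3,u_4\}$. Since Case~\ref{case:trivial} does not hold, $T''\subseteq A_1\cup A_2$. Write $T''=\{xy,xu,yu\}$ with $xy\in\{u_1u_2,u_1u_3,u_2u_4,u_3u_4,u_2u_3\}$; this is possible because $T''$ meets $T\cup T'$ and all nodes of $T\cup T'$ lie in $\{u_1,\dots,u_4\}$. I then split according to which edge $xy$ is.

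\textbf{Case $xy\in\{u_1u_2,u_1u_3,u_2u_4,u_3u_4\}$.} Here both endpoints of $xy$ already have two $A_2$-edges inside $T\cup T'$. For instance, $u_1$ carries $u_1u_2,u_1u_3$ and $u_2$ carries $u_1u_2,u_2u_4$. Hence the degree constraint of $A_2$ forces $xu,yu\in A_1\setminus A_2$, and so $|A_1\cap T''|=2$. I then aim to show that $A_1\Delta T''$ is $\calT$-free, which would make $T''$ satisfy the hypothesis of Case~\ref{case:2A1A2}, a contradiction.

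Take $xy=u_1u_2$. We have $A_1\Delta T''=(A_1\setminus\{u_1u,u_2u\})\cup\{u_1u_2\}$. The edge $u_2u_3$ lies in $A_1$ and is untouched, so $(A_1\Delta T'')\cap T=\{u_1u_2,u_2u_3\}$, using $u_1u_3\notin A_1$. Observation~\ref{obs:2edges} applied to $A_1\Delta T''$, which is a $2$-matching by the degree count, then shows that $A_1\Delta T''$ is $\calT(\{u_1u_2\})$-free. Since $A_1\Delta T''\subseteq A_1\cup\{u_1u_2\}$ and $A_1$ is $\calT$-free, any forbidden triangle in $A_1\Delta T''$ would have to use $u_1u_2$. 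So $A_1\Delta T''$ is $\calT$-free.

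The same argument works for the other three choices of $xy$. For $u_1u_3$ one uses $T$ again. For $u_2u_4$ and $u_3u_4$ one uses $T'$: there $u_2u_3\in A_1$ is again the retained edge, and one needs $u_3u_4\notin A_1$ (respectively $u_2u_4\notin A_1$), which holds since these edges are in $A_2\setminus A_1$.

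\textbf{Case $xy=u_2u_3$.} Nodes $u_2$ and $u_3$ each already carry two $A_2$-edges, so $u_2u,u_3u\in A_1\setminus A_2$. Together with $u_2u_3\in A_1$, this gives $T''\subseteq A_1$. That contradicts $A_1$ being $\calT$-free.

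The only step needing care is the first case. There I must confirm that $A_1\Delta T''$ is a $2$-matching, which holds because degrees at $x$ and $y$ do not increase overall and $u$ loses two edges. I must also confirm that exactly two edges of $T$ (or $T'$) survive, which relies on $u_2u_3\in A_1$ while the other two edges of that triangle are not in $A_1$. Everything else is symmetric bookkeeping.
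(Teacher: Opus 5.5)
Your proof is correct and is exactly the symmetric version of the proof of Claim~\ref{clm:vertexSet} that the paper intends: the degree constraint of $A_2$ forces $xu,yu\in A_1\setminus A_2$, and Case~\ref{case:2A1A2} (in place of Case~\ref{case:A12A2}) gives the contradiction, with your checks that $A_1\Delta T''$ is a $2$-matching meeting $T$ (or $T'$) in exactly two edges all correct. As a minor shortcut, Case~\ref{case:final1} is also assumed not to hold, so $|A_1\cap T''|=2$ with $T''\in\calT$ already gives a contradiction, and the $\calT$-freeness check on $A_1\Delta T''$ is not even needed.
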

    \begin{proof}
    The proof is symmetric to the proof of Claim~\ref{clm:vertexSet}.
    \end{proof}

    We consider the following subcases.

    \subcase{$u_1u_4\in A_1.$}\label{case:final2u1u4} Notice that $u_1u_4\notin A_2$, because of the degree constraints on $u_1$ and $u_4$. If $w(u_1u_4) + w(u_2u_3) > w(u_1u_2) + w(u_3u_4)$ then let $P:=\{u_1u_4, u_1u_2, u_2u_3, u_3u_4\}$. We claim that $P$ satisfies the conditions of the theorem. Indeed, since $w(u_1u_4) + w(u_2u_3) > w(u_1u_2) + w(u_3u_4)$, $w(A_2\Delta P) > w(A_2)$. One has $|P| + 2\selfloop(P)= 4 < 7/\eps$, and since $P$ is an even closed alternating trail, $A_2\Delta P$ is a $2$-matching. Moreover, by Claim~\ref{clm:vertexSet2}, the only possible triangles in $A_2\Delta P$ are $T, T'$, $\{u_1u_2, u_2u_4, u_1u_4\}$, and $\{u_1u_3, u_3u_4, u_1u_4\}$, but none of them is in $A_2\Delta P$, so $A_2\Delta P$ is $\calT$-free. 
    
    From now on we assume that $w(u_1u_4) + w(u_2u_3) \leq w(u_1u_2) + w(u_3u_4)$. We set (see Figure~\ref{fig:final2u1u4}):

    \begin{itemize}
        \item $A_1':=A_1$, $A_2':=(A_2\setminus\{u_1u_2, u_3u_4\})\cup\{u_1u_4, u_2u_3\}$
        \item $w'(e):=w(e)$ for all $e\in E\setminus\{u_2u_3, u_1u_4\}$,
        \item $w'(u_2u_3) = w(u_1u_2)$, $w'(u_1u_4)=w(u_3u_4)$,
        \item $\calT':=\calT\setminus\calT(T\cup T')$.
    \end{itemize}

    \begin{figure}
        \centering
        \includegraphics{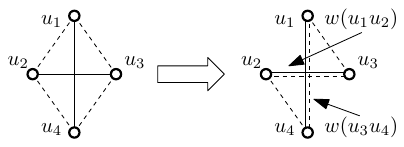}
        \caption{Reduction for Case~\ref{case:final2u1u4}.}
        \label{fig:final2u1u4}
    \end{figure}

    Notice that $A_1'$ and $A_2'$ are both $\calT'$-free $2$-matchings of $G'$.  
    Also, $w'(A_2')=w(A_2)$ and $w'(A_1')=w(A_1) - w(u_1u_4) - w(u_2u_3) + w(u_1u_2) + w(u_3u_4)\geq w(A_1)$ since $w(u_1u_4) + w(u_2u_3) \leq w(u_1u_2) + w(u_3u_4)$. Thus, by the assumption of the theorem, $w'(A_2')=w(A_2)<(1-\eps)w(A_1)\leq (1-\eps) w'(A_1')$. Thus one can apply the induction hypothesis to find an alternating trail $P'$ w.r.t.~$(A_1', A_2')$ such that $|P'| + 2\selfloop(P')\leq 7/\eps$, $w'(A_2'\Delta P')>w'(A_2')$, and $A_2'\Delta P'$ is a $\calT'$-free $2$-matching of $G'$. Let $P:=P'$, and notice that $P$ is an alternating trail w.r.t.~$(A_1, A_2)$. We claim that $P$ satisfies the conditions of the theorem.

    One has that $|P| + 2\selfloop(P) = |P'| + 2\selfloop(P')\leq 7/\eps$. Notice that $A_2\Delta P = ((A_2'\Delta P') \setminus\{u_1u_4, u_2u_3\})\cup \{u_1u_2, u_3u_4\}$. Since $w'(u_1u_4)=w(u_1u_2)$ and $w'(u_2u_3)=w(u_3u_4)$, one has that $w(A_2\Delta P)=w'(A_2'\Delta P')>w'(A_2')=w(A_2)$. Again, since $A_2\Delta P = ((A_2'\Delta P') \setminus\{u_1u_4, u_2u_3\})\cup \{u_1u_2, u_3u_4\}$, $d_{A_2\Delta P}(u)=d_{A_2'\Delta P'}(u)\leq 2$ for every $u\in V$, and thus $A_2\Delta P$ is a $2$-matching. It is left to show that $A_2\Delta P$ is $\calT$-free.

    Since $A_2'\Delta P'$ is $\calT'$-free, so is $A_2\Delta P$. By Claim~\ref{clm:vertexSet2} it is only left to show that $A_2\Delta P$ does not contain $T, T'$, $T'':=\{u_1u_4, u_3u_4, u_1u_3\}$, and $T''':=\{u_1u_4, u_2u_4, u_1u_2\}$. Since $u_2u_3\notin A_2\Delta P$, neither $T$ nor $T'$ is in $A_2\Delta P$. Since $u_1u_4\notin A_2\Delta P$, neither $T$ nor $T'$ is in $A_2\Delta P$. The claim follows.

    \subcase{$u_1u_4\notin A_1$.}\label{case:final2notu1u4} Notice that $u_1u_4\notin A_2$, because of the degree constraints on $u_1$ and $u_4$. We build an auxiliary graph $G' = (V', E', w')$ 
    by splitting each of $u_2$ and $u_3$ in the same way as in Case~\ref{case:final1notu1u4}.
    Let (see Figure~\ref{fig:final2} left):
    \begin{itemize}
        \item $V' = V\cup \{u_2', u_3'\}$, $E' = (E\setminus\{u_1u_2, u_3u_4, u_2u_3\})\cup\{u_1u_2', u_3'u_4, u_2'u_3', u_2u_2', u_3u_3'\}$, 
        \item $A_1':=(A_1\setminus\{u_2u_3\})\cup\{u_2'u_3', u_2u_2', u_3u_3'\}$, $A_2':=(A_2\setminus\{u_1u_2, u_3u_4\})\cup\{u_1u_2', u_3'u_4, u_2u_2', u_3u_3'\}$,
        \item $w'(e):=w(e)$ for all $e\in E\setminus\{u_1u_2, u_3u_4, u_2u_3\}$,
        \item $w'(u_1u_2')=w(u_1u_2)$, $w'(u_3'u_4)=w(u_3u_4)$, $w'(u_2'u_3')=w(u_2u_3)$, $w'(u_2u_2')=w'(u_3u_3')=0$,
        \item $\calT' = \calT\setminus\calT(T\cup T')$.
    \end{itemize}

    \begin{figure}
        \centering
        \includegraphics{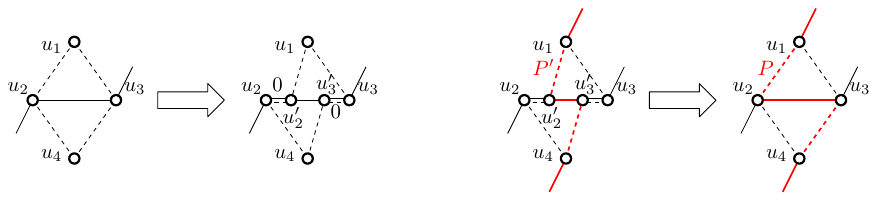}
        \caption{Reduction for Case~\ref{case:final2notu1u4}.}
        \label{fig:final2}
    \end{figure}

    Notice that $A_1'$ and $A_2'$ are both $2$-matchings of $G'$. Moreover, since $A_1$ and $A_2$ are $\calT$-free, and the edges added to $A_1$ and $A_2$ are not in $G$, $A_1'$ and $A_2'$ are both $\calT'$-free. 
    Notice as well that $w'(A_1') = w(A_1)$ and $w'(A_2') = w(A_2)$.
    
    Since $A_1'$ and $A_2'$ are both $\calT'$-free $2$-matchings and $w'(A_2')=w(A_2) < (1-\eps)w(A_1)= (1-\eps)w'(A_1')$, one can apply the induction hypothesis to find an alternating trail $P'$ w.r.t.~$(A_1', A_2')$ such that $|P'| + 2\selfloop(P')\leq 7/\eps$, $w'(A_2'\Delta P')>w'(A_2')$, and $A_2'\Delta P'$ is a $\calT'$-free $2$-matching of $G'$. Let $P$ be the trail resulting from replacing $u_1u_2'$ by $u_1u_2$, $u_2'u_3'$ by $u_2u_3$, and $u_3'u_4$ by $u_3u_4$, if any of them is present in $P'$ (see Figure~\ref{fig:final2} right). Notice that $P$ is an alternating trail w.r.t.~$(A_1, A_2)$. We claim that $P$ satisfies the conditions of the theorem. 

    Notice that $|P| + 2\selfloop(P)=|P'| + 2\selfloop(P')\leq 7/\eps$. We now prove that $w(A_2\Delta P)>w(A_2)$. 
    Observe that $A_2$ (resp.,~$A_2\Delta P$) in $G$ is obtained from $A'_2$ (resp.,~$A'_2 \Delta P'$) in $G'$ by contracting both $u_2 u'_2$ and $u_3 u'_3$.
    Since $w(u_1u_2) = w'(u_1'u_2)$, $w(u_2u_3) = w'(u_2'u_3')$, $w(u_3u_4) = w'(u_3'u_4)$, and $w'(u_2u_2') = w'(u_3u_3') = 0$, this observation shows that $w(A_2\Delta P) = w'(A_2'\Delta P')>w'(A_2')=w(A_2)$.
    
    Let us now show that $A_2\Delta P$ is a $2$-matching of $G$. 
    Since $A_2\Delta P$ is obtained from $A'_2 \Delta P'$ by contracting both $u_2 u'_2$ and $u_3 u'_3$, for every node $u\in V\setminus\{u_2, u_3\}$, $d_{A_2\Delta P}(u) = d_{A_2'\Delta P'}(u)\leq 2$. Since the edge $u_2u_2'$ counts toward both $d_{A_2'\Delta P'}(u_2)$ and $d_{A_2'\Delta P'}(u_2')$, but is not present in $A_2\Delta P$, one has that $d_{A_2\Delta P}(u_2) = d_{A_2'\Delta P'}(u_2)-1+d_{A_2'\Delta P'}(u_2')-1\leq 2$. A similar argument works for $u_3$, so that $d_{A_2\Delta P}(u)\leq 2$ for every $u\in V$.
        
    Finally, it is left to show that $A_2\Delta P$ is $\calT$-free. Since $A_2'\Delta P'$ is $\calT'$-free, so is $A_2\Delta P$. We now show that $A_2\Delta P$ is $\calT(T\cup T')$-free, which implies that it is $\calT$-free. By Claim~\ref{clm:vertexSet2} and by this case assumption $u_1u_4\notin A_1\cup A_2$, the only triangles in $\calT$ that are not in $\calT'$ are $T$ and $T'$. Notice that if $u_2'u_3'\in P'$ then both $u_1u_2'\in P'$ and $u_3'u_4\in P'$ hold, because of degree constraints on $u_2'$ and $u_3'$. This implies that if $u_2u_3\in P$ then $u_1u_2, u_3u_4\in P$, which in turn implies that if $u_2u_3\in A_2\Delta P$, then $u_1u_2, u_3u_4\notin A_2\Delta P$, so neither $T$ nor $T'$ is in $A_2\Delta P$. The claim follows. \qed
       
    \end{caseanalysis}

\section*{Acknowledgments}
This work was supported by the joint project of Kyoto University and Toyota Motor Corporation,
titled ``Advanced Mathematical Science for Mobility Society'', by JSPS KAKENHI Grant Numbers JP22H05001 and JP24K02901, and by JST SPRING, Grant Number JPMJSP2110. The first 2 authors are partially supported by the SNF Grant $200021\_200731 / 1$.

\bibliographystyle{abbrv}
\bibliography{ref}

\end{document}